\documentclass[11pt,a4paper]{article}

\usepackage[T1]{fontenc}
\usepackage[utf8]{inputenc}
\usepackage{lmodern}
\usepackage[margin=2.7cm]{geometry}
\usepackage{amsmath,amssymb,amsthm,mathtools}
\usepackage{bm}
\usepackage{booktabs}
\usepackage{enumitem}
\usepackage[dvipsnames]{xcolor}
\usepackage{authblk}
\usepackage{hyperref}
\hypersetup{
  colorlinks=true,
  linkcolor=MidnightBlue,
  citecolor=MidnightBlue,
  urlcolor=MidnightBlue
}

\numberwithin{equation}{section}

\theoremstyle{definition}
\newtheorem{definition}{Definition}[section]
\newtheorem{example}[definition]{Example}
\newtheorem{remark}[definition]{Remark}
\theoremstyle{definition}
\newtheorem{theorem}[definition]{Theorem}
\newtheorem{proposition}[definition]{Proposition}
\newtheorem{corollary}[definition]{Corollary}
\newtheorem{lemma}[definition]{Lemma}
\newtheorem{problem}[definition]{Problem}
\usepackage[capitalize,noabbrev]{cleveref}

\newcommand{\F}{\mathbb{F}}
\newcommand{\PG}{\mathrm{PG}}
\newcommand{\E}{\mathbb{E}}
\newcommand{\Span}{\langle}
\newcommand{\cG}{\mathcal{G}}
\newcommand{\cC}{\mathcal{C}}
\newcommand{\cF}{\mathcal{F}}
\newcommand{\supp}{\mathrm{supp}}

\newcommand{\cI}{\mathcal{I}}
\newcommand{\cD}{\mathcal{D}}
\newcommand{\Tmax}{T_{\max}}
\newcommand{\Tavg}{T_{\mathrm{avg}}}
\newcommand{\rk}{\operatorname{rk}}
\newcommand{\Aut}{\operatorname{Aut}}
\newcommand{\qbinom}[2]{\genfrac{[}{]}{0pt}{}{#1}{#2}_q}

\title{The Generalized Random Access Problem for Linear Codes}

\author[1]{Anina Gruica}
\affil[1]{Technical University of Denmark, Denmark}
\author[2]{Antonio Petrillo}
\author[2]{Ferdinando Zullo}
\affil[2]{Universit\`a degli Studi della Campania ``Luigi Vanvitelli'', Italy}
\date{}

\begin{document}
\maketitle

\begin{abstract}
Random access is a central requirement in DNA-based storage systems: one would like to recover selected information symbols without sequencing the whole encoded object.  A recent combinatorial model associates to a generator matrix $G\in \F_q^{k\times n}$ the random variable $\tau_i(G)$, measuring the number of sampled columns needed to recover the information vector $e_i$.  {We study the cardinality-based extremal and finite-geometric aspects of simultaneous multi-symbol recovery.  For a nonempty set $I\subseteq[k]$, let $\tau_I(G)$ denote the number of random column samples needed until all vectors $e_i$, $i\in I$, lie in the span of the observed columns.  This variable interpolates between the singleton random access problem and the full-recovery problem underlying coverage depth.  For each $m$, we introduce uniform worst-case and average parameters over all requested sets $I$ with $|I|=m$.  Using the known subset-counting formula for $\E[\tau_I(G)]$, we establish general upper and lower bounds for these parameters.  In particular, the lower bounds are expressed through order statistics of the singleton recovery variables and specialize to the known singleton bounds when $m=1$.  For systematic MDS encoders, we record an equivalent form of the known multi-symbol expectation formula and derive monotonicity and asymptotic consequences.  For simplex encoders in arbitrary dimension, we obtain closed formulae in terms of Gaussian binomial coefficients; the full-recovery endpoint agrees with the known coverage-depth formula for simplex codes.  Finally, in dimension three we study balanced quasi-arcs and compare their values with the simplex and MDS benchmarks.  The resulting formulae exhibit three regimes: balanced quasi-arcs improve the recovery of one fundamental information symbol, remain slightly better than systematic MDS encoders for the recovery of two fundamental symbols in length-matched examples, whereas systematic MDS encoders attain the optimal full-recovery value whenever they exist.}
\end{abstract}

\medskip
\noindent\textbf{Keywords.} DNA storage; random access; linear codes; MDS codes; simplex codes; finite projective geometry; balanced quasi-arcs.

\medskip
\noindent\textbf{MSC 2020.} 94B05, 94B27, 51E20, 60C05.

\section{Introduction}

DNA-based data storage has motivated a wide range of coding-theoretic questions in which information is represented by large unordered collections of short DNA strands and sequencing is modeled as a random sampling process. Besides reliability and storage density, an important goal is \emph{random access}: the ability to retrieve a prescribed part of the stored information without sequencing the entire encoded object. Random access has been demonstrated experimentally through selective amplification techniques, while its coding-theoretic and information-theoretic aspects have been investigated in several recent works; see, for instance, \cite{organick2018random,yazdi2015dna,shomorony2022information,milenkovic2024review}.

The model considered in this paper starts from a linear code with generator matrix $G\in\F_q^{k\times n}$. The columns of $G$ represent the encoded symbols that are sampled independently and uniformly at random, with replacement. Recovering the $i$-th information component is equivalent to observing enough columns so that the standard basis vector $e_i\in\F_q^k$ belongs to their linear span. The corresponding singleton random access problem was introduced in the coverage-depth framework of \cite{bar2024cover} and subsequently studied using combinatorial, geometric, probabilistic, and algorithmic methods in \cite{gruica2024combinatorial,gruica2024geometry,boruchovsky2025making,bodur2025randomvariables,wang2026randomaccess}. In particular, these works investigate exact expectation formulae, extremal constructions, geometric configurations, lower bounds, efficient computational methods, and the full distribution of the associated recovery variables.

At the opposite extreme, one may ask to recover all the basis vectors $e_1,\ldots,e_k$, or equivalently the entire information vector. This is the full-recovery coverage-depth problem, for which MDS codes attain the optimal expected retrieval time; see \cite{bar2024cover,bertuzzo2025coverage,bertuzzo2026duality}. The singleton and full-recovery problems therefore describe two extremal regimes: the recovery of one prescribed information symbol and the recovery of the complete information object.

The purpose of this paper is to develop a uniform framework for the intermediate regime. Given a nonempty subset $I\subseteq[k]$, we study the number of random column samples needed, on average, to recover all information vectors $e_i$ with $i\in I$. Thus, the case $|I|=1$ gives the classical singleton random access problem, whereas $I=[k]$ gives full recovery. The values $1<|I|<k$ describe the transition between these two regimes and allow one to investigate which properties of an encoder favor partial recovery and which favor global recovery.

A closely related block-structured retrieval problem has recently been studied by Bar-Lev in \cite{BarLev2026CodedRetrieval}. In that setting, the information symbols are partitioned into two complementary files, and the main objective is to determine the achievable trade-off between their expected retrieval times. By contrast, our approach is cardinality-based: for each $m$, we consider all requested subsets $I\subseteq[k]$ with $|I|=m$ and study both their worst-case and average recovery behavior. This leads naturally to the parameters $\Tmax^{(m)}$ and $\Tavg^{(m)}$, which provide a uniform sequence of extremal quantities interpolating between singleton random access and full recovery.

We use the general subset-counting formula for $\E[\tau_I(G)]$, already implicit in the framework of \cite{gruica2024combinatorial} and also employed in \cite{BarLev2026CodedRetrieval}. This formula expresses the expectation in terms of the numbers $\alpha_I(s)$ of $s$-subsets of columns whose span contains all the requested vectors. The problem is therefore reduced to a finite-geometric counting question. Building on this framework, we establish general upper and lower bounds for $\Tmax^{(m)}$ and $\Tavg^{(m)}$. In particular, our lower bounds are obtained through the order statistics of the singleton recovery variables and specialize, for $m=1$, to the known singleton bounds from \cite{bar2024cover}. For systematic encoders, we also develop a dual-code interpretation showing that nontrivial recovering subsets are controlled by low-weight dual codewords involving the requested systematic positions.

Symmetry plays an important role in this framework. If the information-preserving automorphism group of the encoder acts transitively on the requested information sets of a fixed cardinality, then $\E[\tau_I(G)]$ depends only on $|I|$ and not on the particular choice of $I$. More generally, the computation of $\Tmax^{(m)}$ and $\Tavg^{(m)}$ can be reduced to representatives of the corresponding automorphism-group orbits.

We then evaluate the generalized recovery expectations for several important families. For systematic MDS encoders, we recover an equivalent form of the expectation formula appearing in \cite{BarLev2026CodedRetrieval}, expressed explicitly as a function of $m=|I|$, and derive monotonicity and asymptotic consequences. We also analyze simplex encoders in arbitrary dimension. Their projective system is the whole space $\PG(k-1,q)$, and the computation reduces to counting subsets of projective points whose span contains a prescribed coordinate subspace. The full-recovery endpoint agrees with the simplex coverage-depth formula obtained in \cite{bertuzzo2026duality}.

Finally, in dimension $k=3$, we study balanced quasi-arcs, which were introduced in the random access setting in \cite{gruica2024geometry} and further analyzed in \cite{bodur2025randomvariables}. These configurations privilege three fundamental points and their joining lines, making them particularly suitable for investigating partial recovery. We compare their generalized recovery values with the simplex and systematic MDS benchmarks. The resulting formulae exhibit three different regimes: balanced quasi-arcs improve the recovery of one fundamental information symbol, remain slightly better than systematic MDS encoders for the recovery of two fundamental symbols in length-matched examples, whereas systematic MDS encoders attain the optimal full-recovery value whenever they exist.

\paragraph{Organization.}
\Cref{sec:model} introduces the generalized random access variable, the counting function $\alpha_I(s)$, and the extremal parameters $\Tmax^{(m)}$ and $\Tavg^{(m)}$. {\Cref{sec:general-expectation} recalls the general expectation formula, includes a proof for completeness, records basic properties, and discusses the dual-code viewpoint.}  \Cref{sec:bounds} gives general upper and lower bounds.  \Cref{sec:mds} treats systematic MDS encoders.  \Cref{sec:simplex-general} treats simplex encoders in arbitrary dimension.  \Cref{sec:dimension-three} contains the dimension-three analysis for balanced quasi-arcs, together with numerical comparisons.  We conclude with a discussion and some directions for future work.

\section{The generalized random access problem}\label{sec:model}

Let $G\in \F_q^{k\times n}$ be a generator matrix of a linear code, and denote its columns by
\[
g_1,\ldots,g_n\in \F_q^k.
\]
The columns of $G$ represent the encoded symbols that can be queried in order to recover information symbols.  In the classical random access problem, one fixes an index $i\in [k]$ and studies how many randomly chosen encoded symbols are needed, on average, to recover the $i$-th information symbol.  In linear-algebraic terms, this amounts to asking when the canonical vector $e_i$ belongs to the span of the columns that have been observed.

In this paper we consider a simultaneous version of the same problem.  Instead of recovering a single information symbol, we fix a nonempty set
\[
I\subseteq [k]
\]
and ask for the number of random queries needed to recover all information symbols indexed by $I$.  Equivalently, we want the whole set of canonical vectors
\[
\{e_i : i\in I\}
\]
to be contained in the span of the columns collected so far.

{The reason for considering this intermediate problem is that it connects two extremal situations that have previously been studied separately.}  When $|I|=1$, one recovers the usual random access problem for one information coordinate.  When $I=[k]$, the problem becomes the recovery of the full information vector, and it is closely related to the coverage depth problem.  Thus the generalized random access problem interpolates between local recovery and global recovery.

Studying the range
\[
1<|I|<k
\]
allows us to understand how the expected number of random accesses changes as the amount of requested information increases.  In this sense, the generalized problem provides a bridge between the local behavior of a code, measured by the recovery of one coordinate, and its global behavior, measured by the recovery of all coordinates.  This intermediate point of view can help identify which structural properties of the generator matrix are responsible for good performance in the extremal cases.

We assume that the columns are sampled independently, uniformly at random, and with replacement.  Thus, if $X_1,X_2,\ldots$ is a sequence of independent random variables uniformly distributed on $[n]$, then after $r$ queries the set of distinct columns that have been observed is
\[
S_r=\{X_1,\ldots,X_r\}\subseteq [n].
\]
Notice that the number of queries is $r$, while $|S_r|$ may be smaller than $r$, since repetitions are allowed in the sampling process.

\begin{definition}[Generalized random access variable]
Let $\emptyset\neq I\subseteq [k]$.  The generalized random access variable associated with $I$ is the random variable
\[
\tau_I(G):=
\min\left\{
r\geq 1:
\{e_i : i\in I\}\subseteq
\Span\{g_j : j\in S_r\}\rangle
\right\}.
\]
When $I=\{i\}$, we write $\tau_i(G)$, recovering the usual random access variable for the $i$-th information symbol.
\end{definition}

\begin{remark}\label{rem:encoder-dependence}
The generalized random access variable $\tau_I(G)$ depends on the chosen
encoder $G$, and not only on the abstract linear code generated by $G$.
Indeed, replacing $G$ by $MG$, with $M\in \mathrm{GL}_k(q)$, gives a
generator matrix for the same code, but changes the interpretation of the
information coordinates $e_1,\ldots,e_k$.  Thus, throughout the paper, we
regard $G$ as part of the data.  When we speak about systematic MDS
encoders, we always mean systematic generator matrices $G=(I_k\mid A)$.
\end{remark}

The quantity $\tau_I(G)$ measures the efficiency of the code for the simultaneous recovery of the information symbols indexed by $I$.  Small values of $\E[\tau_I(G)]$ mean that, on average, few random accesses are sufficient to reconstruct the requested information symbols.

For a fixed integer $m$, it will be useful to collect all requested subsets of size $m$ in the notation
\[
\cI_m:=\{I\subseteq [k]: |I|=m\}.
\]
We define the corresponding maximum and average expected numbers of random accesses by
\[
\Tmax^{(m)}(G):=
\max_{I\in \cI_m}\E[\tau_I(G)]
\]
and
\[
\Tavg^{(m)}(G):=
\frac{1}{\binom{k}{m}}
\sum_{I\in \cI_m}\E[\tau_I(G)].
\]
For $m=1$ these are exactly the parameters usually denoted by $\Tmax$ and $\Tavg$ in the singleton random-access problem.  Thus $\Tmax^{(m)}(G)$ and $\Tavg^{(m)}(G)$ provide a uniform notation for all levels of recovery, from one information symbol to the full information vector.

\begin{problem}[Generalized random access problem]\label{prob:generalized-random-access}
Let $q$ be a prime power and let $1\leq m\leq k\leq n$.  For a generator matrix
$G\in \F_q^{k\times n}$ of rank $k$, and for a subset $I\in \cI_m$, determine the expected value
\[
\E[\tau_I(G)].
\]
More generally, determine which generator matrices minimize this expectation, either for a fixed subset $I$, or uniformly over all subsets of cardinality $m$.  In particular, study the extremal quantities
\[
T_q^{\max}(n,k,m):=
\min_{\substack{G\in \F_q^{k\times n}\\ \rk(G)=k}}
\Tmax^{(m)}(G)
\]
and
\[
T_q^{\mathrm{avg}}(n,k,m):=
\min_{\substack{G\in \F_q^{k\times n}\\ \rk(G)=k}}
\Tavg^{(m)}(G).
\]
\end{problem}

The cases $m=1$ and $m=k$ correspond to two extremal regimes.  For $m=1$, the problem reduces to the classical random access problem for the recovery of a single information symbol.  For $m=k$, the problem becomes the recovery of the entire information vector and is closely related to the coverage depth problem.  Thus, the range
\[
1<m<k
\]
interpolates between local and global recovery.

The generalized random access problem asks how the expected number of random accesses changes as $m$ increases, and which geometric or coding-theoretic properties of $G$ control this transition.  In this sense, the intermediate values of $m$ provide a framework for understanding the structural features that are responsible for optimal or near-optimal behaviour in the two extremal cases.

In order to compute this expectation, it is useful to separate the probabilistic part of the sampling process from the linear-algebraic properties of the generator matrix.  This is done through the following counting function.

\begin{definition}
Let $0\leq s\leq n$.  We define
\[
\alpha_I(s):=
\left|
\left\{
S\subseteq [n] :
|S|=s,
\{e_i : i\in I\}\subseteq
\Span\{g_j : j\in S\}\rangle
\right\}
\right|.
\]
Equivalently, $\alpha_I(s)$ is the number of subsets of $s$ distinct columns of $G$ that are sufficient to recover all the information symbols indexed by $I$.  Since $I\neq\emptyset$, we set $\alpha_I(0)=0$.
\end{definition}

The function $\alpha_I(s)$ contains all the information about the geometry of the columns of $G$ that is relevant for the generalized random access problem.  Indeed, while the random sampling process is the same for every generator matrix with $n$ columns, the values of $\alpha_I(s)$ depend on how the columns of $G$ are arranged in $\F_q^k$.  The expectation of $\tau_I(G)$ can therefore be expressed purely in terms of these numbers.

This formulation is particularly useful because it treats the classical random access problem and the coverage depth problem within the same framework.  The cases $|I|=1$ and $|I|=k$ appear as two extremes of a single family of problems, while the intermediate values of $|I|$ measure the transition from local to global recovery.  In the next sections we exploit this point of view to compare different families of codes and to understand how their geometry affects the expected number of random accesses.

\subsection{Geometric formulation}\label{subsec:geometric-formulation}

The generalized random access problem can also be formulated in projective
geometric terms.  This point of view is useful because the recovery condition
depends only on linear spans of columns, and hence only on the corresponding
projective points.

Let $G\in \F_q^{k\times n}$ be a generator matrix of rank $k$, and assume
that its columns are nonzero.  This assumption is harmless for the geometric
formulation of the encoders considered below; moreover, in extremal questions
for fixed $n$, replacing a zero column by a nonzero column cannot increase any
of the expected recovery variables.  If
\[
g_1,\ldots,g_n\in \F_q^k
\]
are the columns of $G$, then each column defines a point
\[
P_j:=\langle g_j\rangle\in \PG(k-1,q).
\]
Thus $G$ determines a projective system
\[
\cG=\{P_1,\ldots,P_n\}\subseteq \PG(k-1,q),
\]
where repetitions are allowed if two columns are proportional.  Conversely,
choosing homogeneous representatives for the points of a rank-$k$ projective
system gives a generator matrix.  Since the recovery condition is invariant
under multiplication of columns by nonzero scalars, the random access variables
depend only on this projective system.

We denote by
\[
E_i:=\langle e_i\rangle,\qquad i\in[k],
\]
the fundamental points of $\PG(k-1,q)$.  For $I\subseteq[k]$, we write
\[
\cF_I:=\{E_i:i\in I\}.
\]
In the geometric language, recovering the information symbols indexed by
$I$ means that the span of the sampled projective points contains the set
$\cF_I$.

As before, the points of $\cG$ are sampled independently, uniformly at
random, and with replacement.  If $X_1,X_2,\ldots$ are independent random
variables uniformly distributed on $[n]$, we set
\[
S_r:=\{X_1,\ldots,X_r\}\subseteq[n].
\]
The corresponding geometric random access variable is
\[
\tau_I(\cG):=
\min\left\{
r\geq 1:
\cF_I\subseteq
\left\langle P_j:j\in S_r\right\rangle
\right\}.
\]
Here $\langle P_j:j\in S_r\rangle$ denotes the projective subspace generated
by the observed points.

Therefore the generalized random access problem is equivalent to the following
finite-geometric problem: given a rank-$k$ multiset
\[
\cG\subseteq \PG(k-1,q)
\]
containing the fundamental points, determine the expected number of random
samples needed until the sampled points generate a subspace containing a
prescribed set $\cF_I$ of fundamental points.

Equivalently, for fixed $1\leq m\leq k$, one may study
\[
\Tmax^{(m)}(\cG):=
\max_{I\in\cI_m}\E[\tau_I(\cG)]
\]
and
\[
\Tavg^{(m)}(\cG):=
\frac{1}{\binom{k}{m}}
\sum_{I\in\cI_m}\E[\tau_I(\cG)].
\]
The corresponding extremal problem asks for rank-$k$ projective systems of
size $n$ for which these quantities are as small as possible.

In this notation, the recovering-set counting function becomes
\[
\alpha_I(\cG,s):=
\left|
\left\{
S\subseteq[n]:
|S|=s,\ 
\cF_I\subseteq
\left\langle P_j:j\in S\right\rangle
\right\}
\right|.
\]
Thus $\alpha_I(\cG,s)$ counts the $s$-subsets of the projective system
whose span contains all requested fundamental points.  When the projective
system $\cG$ is clear from the context, we simply write $\alpha_I(s)$.

The two extremal cases have a simple geometric interpretation.  If
$|I|=1$, one asks when the sampled points generate a subspace containing a
fixed fundamental point.  If $I=[k]$, one asks when the sampled points
generate the whole space $\PG(k-1,q)$.  The intermediate case
$1<|I|<k$ asks when the sampled points generate a subspace containing a
prescribed coordinate subspace.  This is the geometric form of the
interpolation between local and global recovery.

\section{General formula for the expectation}\label{sec:general-expectation}

\subsection{The expectation formula}\label{subsec:expectation-formula}

The previous section separates the sampling process from the geometry of the columns. We now make this separation explicit. Throughout this section, for every integer $n\geq 1$, we denote by
\[
H_n:=\sum_{j=1}^n \frac{1}{j}
\]
the $n$-th harmonic number, and we set $H_0:=0$.

{The following subset-counting formula is implicit in the framework of \cite{gruica2024combinatorial} and is used explicitly in \cite{BarLev2026CodedRetrieval}.  We include a proof for completeness.  Once the numbers $\alpha_I(s)$ are known, the formula determines the expected value of the generalized random access variable for every code and applies uniformly to the two extremal cases $|I|=1$ and $I=[k]$, as well as to all intermediate values of $|I|$.}

\begin{lemma}\label{lem:general-expectation}
Let $G\in \F_q^{k\times n}$ be a generator matrix of rank $k$, and let $\emptyset\neq I\subseteq [k]$. Then
\begin{equation}\label{eq:general-expectation}
\E[\tau_I(G)]
= nH_n-\sum_{s=1}^{n-1}\frac{\alpha_I(s)}{\binom{n-1}{s}}.
\end{equation}
\end{lemma}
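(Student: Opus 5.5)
We plan to use the classical coupon-collector decomposition. Since $\tau_I(G)$ is a positive integer-valued random variable, we write
\[
\E[\tau_I(G)]=\sum_{r\geq 0}\Pr[\tau_I(G)>r],
\]
but it is cleaner to condition on the order in which \emph{distinct} columns appear. Let $Y_1,Y_2,\ldots,Y_n$ denote the distinct indices in order of first appearance among $X_1,X_2,\ldots$. By symmetry of uniform sampling with replacement, $(Y_1,\ldots,Y_n)$ is a uniformly random permutation of $[n]$. Moreover, given that $s$ distinct columns have been seen, the number of additional queries needed to see a new one is geometric with success probability $(n-s)/n$, with mean $n/(n-s)$. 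These waiting times are independent of the permutation.

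Let $\sigma$ be the smallest $s$ such that $\{e_i:i\in I\}\subseteq\langle g_{Y_1},\ldots,g_{Y_s}\rangle$; this exists since $\rk G=k$, and $\sigma\geq 1$. Because the span of the observed columns changes only when a new distinct column appears, $\tau_I(G)$ equals the time at which the $\sigma$-th distinct column is first seen. Hence
\[
\E[\tau_I(G)]=\E\Bigl[\sum_{s=0}^{\sigma-1}\frac{n}{n-s}\Bigr]=\sum_{s=0}^{n-1}\frac{n}{n-s}\Pr[\sigma>s],
\]
using independence of the waiting times from the permutation.

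Now $\Pr[\sigma>s]=1-\alpha_I(s)/\binom{n}{s}$, because $\{Y_1,\ldots,Y_s\}$ is a uniform $s$-subset of $[n]$. We also have $\alpha_I(0)=0$ and $\alpha_I(n)=1$. Substituting gives
\[
\E[\tau_I(G)]=\sum_{s=0}^{n-1}\frac{n}{n-s}-\sum_{s=1}^{n-1}\frac{n}{n-s}\cdot\frac{\alpha_I(s)}{\binom{n}{s}}.
\]
The first sum is $nH_n$. For the second, the identity $\frac{n}{n-s}\binom{n}{s}^{-1}=\binom{n-1}{s}^{-1}$ follows from $\binom{n}{s}=\frac{n}{n-s}\binom{n-1}{s}$. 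This yields \eqref{eq:general-expectation}.

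The main point needing care is the justification that the permutation of first appearances is uniform and independent of the inter-arrival waiting times. We handle it by noting that at each step, conditioned on the past, the next new column is uniform among the $n-s$ unseen ones, regardless of how many repeats occurred. The remaining steps are routine bookkeeping.
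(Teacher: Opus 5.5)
Your proof is correct, but it takes a different route from the paper's. The paper starts from the tail sum $\sum_{r\ge 1}\Pr[\tau_I(G)\ge r]$ and conditions on the number $\eta_{r-1}$ of distinct columns seen after $r-1$ draws. It then inserts the inclusion--exclusion occupancy formula for $\Pr[\eta_{r-1}=s]$, sums the geometric series in $r$, and invokes the alternating identity $\sum_{j=0}^{s}(-1)^j\binom{s}{j}\frac{n}{n-s+j}=\binom{n-1}{s}^{-1}$. You instead split the sampling process into the uniformly random order of first appearances and independent geometric waiting times. The weight $n/(n-s)$ then appears directly as the mean time spent with exactly $s$ distinct columns observed. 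In fact, the paper's alternating identity is equivalent to $\sum_{r\ge 1}\Pr[\eta_{r-1}=s]=\frac{n}{n-s}$, which your decomposition gives for free. Both arguments arrive at the same intermediate expression $\sum_{s=0}^{n-1}\frac{n}{n-s}\bigl(1-\alpha_I(s)/\binom{n}{s}\bigr)$.

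Your route is more conceptual and gives more. Writing $\tau_I(G)$ as a sum of $\sigma$ independent geometric variables, with $\sigma$ independent of them, gives access to the full distribution and higher moments. It also makes the coupon-collector comparisons in Proposition~\ref{prop:pointwise-bounds} transparent. Its cost is the independence claim between the order of first appearances and the waiting times; you identify this correctly and justify it adequately.

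The paper's computation avoids any such structural claim and needs only the tail-sum formula and classical occupancy probabilities. Its price is an alternating-sum (partial-fraction) identity that it uses without proof.
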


\begin{proof}
Since $G$ has rank $k$, the random variable $\tau_I(G)$ is almost surely finite. For a positive integer-valued random variable, we have
\[
\E[\tau_I(G)]
=
\sum_{r\geq 1}\Pr[\tau_I(G)\geq r].
\]
Let
\[
\eta_{r-1}:=|S_{r-1}|
\]
be the number of distinct columns observed after $r-1$ samples. Conditional on $\eta_{r-1}=s$, the set of observed columns is uniformly distributed among the $\binom ns$ subsets of $[n]$ of size $s$. Hence
\[
\Pr[\tau_I(G)\geq r\mid \eta_{r-1}=s]
=
1-\frac{\alpha_I(s)}{\binom ns},
\]
for $0\leq s\leq n-1$. Therefore,
\[
\E[\tau_I(G)]
=
\sum_{r\geq 1}
\sum_{s=0}^{n-1}
\left(1-\frac{\alpha_I(s)}{\binom ns}\right)
\Pr[\eta_{r-1}=s].
\]

The standard occupancy formula gives
\[
\Pr[\eta_{r-1}=s]
=
\binom ns
\sum_{j=0}^s
(-1)^j
\binom sj
\left(\frac{s-j}{n}\right)^{r-1}.
\]
Interchanging the finite sum in $s$ with the geometric series in $r$, we obtain
\[
\E[\tau_I(G)]
=
\sum_{s=0}^{n-1}
\bigl(\binom ns-\alpha_I(s)\bigr)
\sum_{j=0}^s
(-1)^j
\binom sj
\frac{n}{n-s+j}.
\]
Using the identity
\[
\sum_{j=0}^s
(-1)^j
\binom sj
\frac{n}{n-s+j}
=
\frac{1}{\binom{n-1}{s}},
\qquad 0\leq s\leq n-1,
\]
we get
\[
\E[\tau_I(G)]
=
\sum_{s=0}^{n-1}
\frac{\binom ns}{\binom{n-1}{s}}
-
\sum_{s=0}^{n-1}
\frac{\alpha_I(s)}{\binom{n-1}{s}}.
\]
Since
\[
\frac{\binom ns}{\binom{n-1}{s}}=\frac{n}{n-s},
\]
we have
\[
\sum_{s=0}^{n-1}
\frac{\binom ns}{\binom{n-1}{s}}
=
\sum_{s=0}^{n-1}\frac{n}{n-s}
=
nH_n.
\]
Finally, since $\alpha_I(0)=0$, formula \eqref{eq:general-expectation} follows.
\end{proof}

\subsection{Basic properties}\label{subsec:basic-properties}

We record some elementary properties of the numbers $\alpha_I(s)$.  They
will be useful later, and they also make explicit the interpolation between
local and global recovery.

\begin{proposition}\label{prop:alpha-properties}
Let $G\in \F_q^{k\times n}$ be a generator matrix of rank $k$, and let
$\emptyset\neq I\subseteq [k]$.  Set $m:=|I|$.  Then the following hold.
\begin{enumerate}[label=\textnormal{(\roman*)}]
\item For every $0\leq s\leq n$,
\[
0\leq \alpha_I(s)\leq \binom ns.
\]

\item If $s<m$, then
\[
\alpha_I(s)=0.
\]

\item We have
\[
\alpha_I(n)=1.
\]

\item If $I\subseteq J\subseteq [k]$, then, for every $s$,
\[
\alpha_J(s)\leq \alpha_I(s).
\]
Consequently,
\[
\tau_I(G)\leq \tau_J(G)
\]
pointwise, and therefore
\[
\E[\tau_I(G)]\leq \E[\tau_J(G)].
\]

\item The family of $I$-recovering subsets is upward closed: if
$S\subseteq [n]$ is $I$-recovering and $S\subseteq T\subseteq [n]$, then
$T$ is also $I$-recovering.  In particular, the normalized quantities
\[
p_I(s):=\frac{\alpha_I(s)}{\binom ns}
\]
are nondecreasing in $s$.
\end{enumerate}
\end{proposition}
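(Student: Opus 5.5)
We will prove the five items directly from the definitions, in the order (i), (ii), (iii), (v), (iv). Throughout we use that $\alpha_I(s)$ counts a subfamily of the $s$-subsets of $[n]$, so only elementary linear algebra is required.

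For (i), the bound $0\leq\alpha_I(s)\leq\binom ns$ is immediate, since $\alpha_I(s)$ is the cardinality of a subset of $\binom{[n]}{s}$.

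For (ii), let $S$ be an $I$-recovering set, meaning $\{e_i:i\in I\}\subseteq\langle g_j:j\in S\rangle$. The vectors $e_i$ for $i\in I$ are linearly independent, so $m\leq\dim\langle g_j:j\in S\rangle\leq|S|$. Hence no set with $|S|<m$ is $I$-recovering, and $\alpha_I(s)=0$ for $s<m$. This is consistent with the convention $\alpha_I(0)=0$.

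For (iii), since $\rk(G)=k$, the full column set spans $\F_q^k$. Therefore $[n]$ is $I$-recovering, and it is the unique subset of size $n$, which gives $\alpha_I(n)=1$.

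For (v), the inclusion $S\subseteq T$ gives $\langle g_j:j\in S\rangle\subseteq\langle g_j:j\in T\rangle$, so upward closure is immediate. Monotonicity of $p_I(s)$ is the only step that needs an argument. We double count the pairs $(S,T)$ with $S$ $I$-recovering, $|S|=s$, $|T|=s+1$ and $S\subseteq T$.
\begin{itemize}
\item Each recovering $S$ extends to exactly $n-s$ such sets $T$, so the number of pairs is $(n-s)\,\alpha_I(s)$.
\item Each such $T$ is recovering by upward closure, and it contains at most $s+1$ sets $S$, so the number of pairs is at most $(s+1)\,\alpha_I(s+1)$.
\end{itemize}
Combining the two counts,
\[
(n-s)\,\alpha_I(s)\leq (s+1)\,\alpha_I(s+1).
\]
Using $\binom n{s+1}=\binom ns\frac{n-s}{s+1}$, this rearranges to $p_I(s)\leq p_I(s+1)$. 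Alternatively, one can pick a uniformly random $(s+1)$-set $T$ and a uniformly random $s$-subset $S\subseteq T$; then $S$ is a uniform $s$-set, and $\{S \text{ recovering}\}\subseteq\{T \text{ recovering}\}$.

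For (iv), take $I\subseteq J$. Any $J$-recovering set spans all $e_j$ with $j\in J$, hence all $e_i$ with $i\in I$, so it is $I$-recovering. This gives $\alpha_J(s)\leq\alpha_I(s)$ for every $s$. For the pointwise statement, fix a sample path $X_1,X_2,\ldots$. At $r=\tau_J(G)$ the set $S_r$ is $J$-recovering, hence $I$-recovering, so $\tau_I(G)\leq\tau_J(G)$. Taking expectations gives $\E[\tau_I(G)]\leq\E[\tau_J(G)]$. The same inequality also follows from \eqref{eq:general-expectation} in \cref{lem:general-expectation} together with $\alpha_J(s)\leq\alpha_I(s)$. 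There is no real obstacle in any item; the double-counting step in (v) is the only part beyond reading off definitions.
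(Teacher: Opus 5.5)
Your proposal is correct and follows essentially the same route as the paper: (i)--(iii) read off from the definitions, (iv) via the inclusion of recovery conditions, and (v) via the same double count $(n-s)\,\alpha_I(s)\leq(s+1)\,\alpha_I(s+1)$. The random-$T$/random-$S$ coupling argument and the alternative derivation of (iv) from \eqref{eq:general-expectation} are valid supplements.
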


\begin{proof}
The first assertion is immediate from the definition of $\alpha_I(s)$.  For
(ii), observe that the vectors $\{e_i:i\in I\}$ are linearly independent and
span an $m$-dimensional subspace of $\F_q^k$.  A set of $s<m$ columns
spans a space of dimension at most $s$, and therefore cannot contain all
these vectors.

Since $G$ has rank $k$, the full set of columns spans $\F_q^k$.  Hence
the unique subset of $[n]$ of size $n$ recovers every canonical vector,
proving (iii).

If $I\subseteq J$, then any set of columns which recovers all vectors
$e_j$, with $j\in J$, also recovers all vectors $e_i$, with $i\in I$.
Thus $\alpha_J(s)\leq \alpha_I(s)$ for every $s$.  The pointwise inequality
$\tau_I(G)\leq \tau_J(G)$ follows from the same inclusion of recovery
conditions, and taking expectations gives the desired inequality.

Finally, adding columns cannot destroy the recovery property, so the family
of $I$-recovering subsets is upward closed.  To prove that $p_I(s)$ is
nondecreasing, count pairs $(S,T)$, where $S$ is an $I$-recovering set
of size $s$, $T$ has size $s+1$, and $S\subseteq T$.  Each recovering
set $S$ of size $s$ is contained in $n-s$ subsets $T$ of size $s+1$,
and every such $T$ is again recovering.  On the other hand, each recovering
set $T$ of size $s+1$ contains at most $s+1$ subsets of size $s$.
Therefore
\[
\alpha_I(s)(n-s)\leq \alpha_I(s+1)(s+1),
\]
which is equivalent to
\[
\frac{\alpha_I(s)}{\binom ns}
\leq
\frac{\alpha_I(s+1)}{\binom n{s+1}}.
\]
This proves the claim.
\end{proof}

\begin{remark}\label{rem:interpolation-monotonicity}
The monotonicity in Proposition~\ref{prop:alpha-properties} reflects the role
of the generalized random access problem as an interpolation between local and
global recovery.  If $i\in I\subseteq [k]$, then
\[
\tau_i(G)\leq \tau_I(G)\leq \tau_{[k]}(G),
\]
and hence
\[
\E[\tau_i(G)]\leq \E[\tau_I(G)]\leq \E[\tau_{[k]}(G)].
\]
Thus, as the number of requested information symbols increases, the generalized
random access variable moves from the classical singleton problem toward the
full-recovery problem.
\end{remark}

\subsection{A dual-code viewpoint}\label{subsec:dual-viewpoint}

The counting function $\alpha_I(s)$ is defined in terms of the span of
subsets of columns of $G$.  Equivalently, for systematic encoders, it can be
interpreted in terms of linear dependencies among the columns, and hence in
terms of the dual code.

Throughout this subsection, assume that
\[
G=(I_k\mid A)
\]
is systematic, and let $\cC$ be the linear code generated by $G$.  Thus
\[
\cC^\perp=\{h\in \F_q^n : Gh^T=0\}
\]
is the space of linear dependencies among the columns of $G$.  We write
$\supp(h)$ for the support of a word $h\in \F_q^n$.

\begin{lemma}\label{lem:dual-recovery-single}
Let $G=(I_k\mid A)$ and let $i\in [k]$.  For a subset
$S\subseteq [n]$, one has
\[
e_i\in \Span\{g_j:j\in S\}\rangle
\]
if and only if either $i\in S$, or there exists a dual codeword
$h\in \cC^\perp$ such that
\[
h_i\neq 0
\qquad\text{and}\qquad
\supp(h)\subseteq S\cup\{i\}.
\]
\end{lemma}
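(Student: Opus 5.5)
The plan is to prove the two implications directly, using the fact that for the systematic matrix $G=(I_k\mid A)$ the $i$-th column is $g_i=e_i$ for every $i\in[k]$. The whole argument rests on translating membership of $e_i$ in a span of columns into a linear dependency among columns, that is, into a vector $h$ with $Gh^T=0$.

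For the ``if'' direction, first suppose $i\in S$. Then $e_i=g_i$ is one of the spanning vectors, and we are done. Otherwise, take $h\in\cC^\perp$ with $h_i\neq 0$ and $\supp(h)\subseteq S\cup\{i\}$. Writing $Gh^T=0$ as a dependency among columns gives $\sum_{j\in\supp(h)} h_j g_j=0$. Isolating the term with index $i$ and using $g_i=e_i$ yields
\[
e_i=-h_i^{-1}\sum_{j\in\supp(h)\setminus\{i\}}h_jg_j .
\]
Every index in this sum lies in $S$, so $e_i\in\langle g_j:j\in S\rangle$.

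For the ``only if'' direction, assume $e_i\in\langle g_j:j\in S\rangle$ and $i\notin S$. Then there are scalars $c_j$, for $j\in S$, with $e_i=\sum_{j\in S}c_jg_j$. Define $h\in\F_q^n$ by $h_j=c_j$ for $j\in S$, $h_i=-1$, and $h_j=0$ for all other indices. This is well defined precisely because $i\notin S$, so the coordinates indexed by $S$ and by $i$ never collide. Since $g_i=e_i$, we get $Gh^T=\sum_{j\in S}c_jg_j-g_i=0$, hence $h\in\cC^\perp$. Moreover $h_i=-1\neq0$ and $\supp(h)\subseteq S\cup\{i\}$, as required.

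There is no serious obstacle here. The only points that need care are the case split on whether $i\in S$, which is needed so that $h_i$ can be set independently of the $c_j$, and the observation that the systematic form gives $g_i=e_i$. That identity is exactly what turns a column dependency involving position $i$ into an expression of $e_i$ in terms of the other columns, and vice versa.
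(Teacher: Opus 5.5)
Your proposal is correct and follows the paper's proof essentially verbatim: the same case split on $i\in S$, the same construction of $h$ from the coefficients expressing $e_i$ (you set $h_i=-1$, $h_j=c_j$, which differs from the paper's $h_i=1$, $h_j=-\lambda_j$ only by an overall sign), and the same isolation of the $i$-th term using $g_i=e_i$ for the converse.
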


\begin{proof}
If $i\in S$, then $e_i=g_i$ belongs to the span of the columns indexed by
$S$.  Suppose therefore that $i\notin S$.

Assume first that $e_i\in \Span\{g_j:j\in S\}\rangle$.  Then there exist
coefficients $\lambda_j\in \F_q$, for $j\in S$, such that
\[
e_i=\sum_{j\in S}\lambda_j g_j.
\]
Since $g_i=e_i$, this gives the linear dependency
\[
g_i-\sum_{j\in S}\lambda_j g_j=0.
\]
Hence the vector $h\in \F_q^n$ defined by
\[
h_i=1,
\qquad
h_j=-\lambda_j \ \text{for } j\in S,
\qquad
h_j=0 \ \text{otherwise},
\]
belongs to $\cC^\perp$, satisfies $h_i\neq 0$, and has support contained
in $S\cup\{i\}$.

Conversely, assume that there exists $h\in \cC^\perp$ with $h_i\neq 0$
and $\supp(h)\subseteq S\cup\{i\}$.  Since $Gh^T=0$, we have
\[
h_i g_i+\sum_{j\in S} h_j g_j=0.
\]
Using $g_i=e_i$ and $h_i\neq 0$, we obtain
\[
e_i=-\sum_{j\in S}\frac{h_j}{h_i}g_j.
\]
Therefore $e_i\in \Span\{g_j:j\in S\}\rangle$, as claimed.
\end{proof}

\begin{corollary}\label{cor:dual-recovery-I}
Let $G=(I_k\mid A)$, let $\emptyset\neq I\subseteq [k]$, and let
$S\subseteq [n]$.  Then $S$ is $I$-recovering if and only if, for every
$i\in I$, either $i\in S$, or there exists $h^{(i)}\in \cC^\perp$ such
that
\[
h^{(i)}_i\neq 0
\qquad\text{and}\qquad
\supp(h^{(i)})\subseteq S\cup\{i\}.
\]
Equivalently,
\[
\alpha_I(s)=
\left|
\left\{
S\subseteq [n] : |S|=s,
\forall i\in I,
\left(
 i\in S
 \text{ or }
 \exists h^{(i)}\in \cC^\perp
 \text{ with }
 h^{(i)}_i\neq 0,
 \supp(h^{(i)})\subseteq S\cup\{i\}
\right)
\right\}
\right|.
\]
\end{corollary}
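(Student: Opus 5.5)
The plan is to reduce the corollary directly to Lemma~\ref{lem:dual-recovery-single}, applied once for each requested index. By definition, $S$ is $I$-recovering when $\{e_i : i\in I\}\subseteq \Span\{g_j:j\in S\}\rangle$. Since $\Span\{g_j:j\in S\}\rangle$ is a subspace, this inclusion of a set of vectors holds if and only if each $e_i$ with $i\in I$ lies in the span. No statement about the span of the $e_i$ themselves is needed, because membership is tested vector by vector.

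For a fixed $i\in I\subseteq[k]$, Lemma~\ref{lem:dual-recovery-single} applies verbatim, since $G$ is systematic and $i\in[k]$. It says that $e_i\in\Span\{g_j:j\in S\}\rangle$ holds exactly when $i\in S$, or when there is some $h^{(i)}\in\cC^\perp$ with $h^{(i)}_i\neq0$ and $\supp(h^{(i)})\subseteq S\cup\{i\}$. Taking the conjunction of these equivalences over all $i\in I$ gives the first statement.

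I will stress that the dual codewords $h^{(i)}$ may be chosen independently for different $i$. The lemma furnishes one such codeword per index, and no compatibility between them is required. This is the only point that needs any care.

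The counting formula for $\alpha_I(s)$ then follows by restricting to $|S|=s$. Indeed, $\alpha_I(s)$ is by definition the number of $s$-subsets that are $I$-recovering, and we have just characterized those subsets. I expect no real obstacle: the argument is a direct unpacking of the definitions together with the single-index lemma.
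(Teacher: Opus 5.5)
Your proof is correct and is the intended argument; the paper gives no separate proof, treating the statement as an immediate corollary of Lemma~\ref{lem:dual-recovery-single}. Its content is exactly your reduction: test $\{e_i:i\in I\}\subseteq\langle g_j:j\in S\rangle$ one vector at a time, choose the codewords $h^{(i)}$ independently for each $i$, and then restrict to $|S|=s$ to obtain the formula for $\alpha_I(s)$.
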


\begin{remark}\label{rem:dual-low-weight}
The dual-code formulation shows that small values of $\E[\tau_I(G)]$ are
closely related to the presence of many low-weight dual codewords involving
the systematic positions indexed by $I$.  Indeed, if $i\notin S$, then the
only way to recover $e_i$ from the columns indexed by $S$ is through a
dual dependency whose support is contained in $S\cup\{i\}$.  Thus, dual
codewords provide alternative recovering sets besides the trivial ones obtained
by directly sampling the systematic coordinates.
\end{remark}

The minimum distance of the dual code gives a first obstruction to the
existence of small nontrivial recovering sets.

\begin{proposition}\label{prop:dual-distance-alpha}
Let $G=(I_k\mid A)$, let $\cC$ be the code generated by $G$, and let
$d^\perp=d(\cC^\perp)$.  Let $I\subseteq [k]$ with $|I|=m$.  Then
\[
\alpha_I(s)=0
\qquad\text{for }0\leq s<m.
\]
Moreover, for every $s$ such that
\[
m\leq s<d^\perp-1,
\]
we have
\[
\alpha_I(s)=\binom{n-m}{s-m}.
\]
\end{proposition}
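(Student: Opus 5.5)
The first claim is Proposition~\ref{prop:alpha-properties}(ii): a set of $s<m$ columns spans a space of dimension at most $s$, so it cannot contain the $m$-dimensional space $\langle e_i : i\in I\rangle$. The substance is therefore the second claim. I will prove it by showing that, for $m\le s<d^\perp-1$, an $s$-subset $S\subseteq[n]$ is $I$-recovering if and only if $I\subseteq S$. Counting the $s$-subsets containing the fixed $m$-set $I$ then gives $\binom{n-m}{s-m}$.

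For the direction ``$I\subseteq S\Rightarrow$ recovering'', I use that $G$ is systematic. For each $i\in I\subseteq S$ we have $g_i=e_i$, so $e_i$ lies in the span of the columns indexed by $S$. This is also the first alternative of Corollary~\ref{cor:dual-recovery-I}.

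For the converse I apply Corollary~\ref{cor:dual-recovery-I}. Suppose $S$ is $I$-recovering with $|S|=s$, and suppose some $i\in I$ lies outside $S$. Then there is $h^{(i)}\in\cC^\perp$ with $h^{(i)}_i\neq 0$ and $\supp(h^{(i)})\subseteq S\cup\{i\}$. This $h^{(i)}$ is a nonzero dual codeword, so its weight satisfies
\[
d^\perp\le \mathrm{wt}(h^{(i)})\le |S\cup\{i\}| = s+1.
\]
This contradicts $s<d^\perp-1$, that is, $s+1<d^\perp$. Hence every $i\in I$ lies in $S$, i.e.\ $I\subseteq S$. Alternatively, one can argue directly without the corollary: since $i\notin S$, the relation $e_i=\sum_{j\in S}\lambda_j g_j$ rewritten as $g_i-\sum_{j\in S}\lambda_j g_j=0$ is a linear dependency among at most $s+1$ columns. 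The minimum distance of $\cC^\perp$ is exactly the smallest number of linearly dependent columns of $G$, so this gives the same contradiction.

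There is no real obstacle; the only point needing care is the strict inequality. The nontrivial recovering sets come from dual codewords of weight at most $s+1$, so the right threshold is $s\le d^\perp-2$, which is exactly the stated condition $s<d^\perp-1$. I will also note that if $\cC^\perp=\{0\}$ (so $n=k$), the convention $d^\perp=\infty$ makes the statement hold for all $s\ge m$, and the same argument covers this case.
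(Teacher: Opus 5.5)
Your proof is correct and matches the paper's argument: for $m\le s<d^\perp-1$, the paper also uses Lemma~\ref{lem:dual-recovery-single} to show that an $I$-recovering $s$-set must contain $I$, since otherwise there is a dual codeword of weight at most $s+1<d^\perp$, and then counts the $s$-subsets containing $I$. Your remark on the degenerate case $\cC^\perp=\{0\}$ is a harmless addition.
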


\begin{proof}
The equality $\alpha_I(s)=0$ for $s<m$ follows from the fact that the
vectors $\{e_i:i\in I\}$ are linearly independent.

Now assume that $m\leq s<d^\perp-1$, and let $S\subseteq [n]$ be an
$I$-recovering set of size $s$.  We claim that $I\subseteq S$.  Suppose,
by contradiction, that there exists $i\in I\setminus S$.  Since $S$
recovers $e_i$, Lemma~\ref{lem:dual-recovery-single} gives a dual codeword
$h\in \cC^\perp$ such that
\[
h_i\neq 0
\qquad\text{and}\qquad
\supp(h)\subseteq S\cup\{i\}.
\]
Hence
\[
\operatorname{wt}(h)\leq |S|+1=s+1<d^\perp,
\]
which contradicts the definition of $d^\perp$.  Therefore $I\subseteq S$.

Conversely, if $I\subseteq S$, then all vectors $e_i$, with $i\in I$,
are directly available among the sampled columns, since $G$ is systematic.
Thus $S$ is $I$-recovering.  Hence, for $m\leq s<d^\perp-1$, the
$I$-recovering subsets of size $s$ are precisely the subsets $S$ of size
$s$ containing $I$.  Their number is
\[
\binom{n-m}{s-m}.
\]
This proves the claim.
\end{proof}

\begin{remark}\label{rem:dual-mds}
Proposition~\ref{prop:dual-distance-alpha} explains the role of the dual
minimum distance in the generalized random access problem.  If $d^\perp$ is
large, then there are no small nontrivial dual dependencies, and for small
values of $s$ the only way to recover the symbols indexed by $I$ is to
sample the corresponding systematic positions directly.  Improvements over
this trivial recovery mechanism can only appear starting from subsets of size
at least $d^\perp-1$.

For systematic MDS encoders, the dual code is again MDS and $d^\perp=k+1$.
Therefore Proposition~\ref{prop:dual-distance-alpha} gives
\[
\alpha_I(s)=\binom{n-m}{s-m}
\qquad
\text{for }m\leq s<k,
\]
which is exactly the first part of the counting used later in the MDS section.
In this sense, the MDS computation can be viewed as the case where no
nontrivial low-weight dual dependencies are available before dimension $k$.
\end{remark}

\subsection{Automorphisms and uniformity of the random access variables}
\label{subsec:automorphisms}

We now record a simple but useful consequence of symmetry. Since the
generalized random access variable depends on the chosen generator matrix, the
relevant automorphisms are those that preserve the projective system defined by
the columns of $G$ and act compatibly on the distinguished information
positions.

Let $G\in \F_q^{k\times n}$ have columns $g_1,\ldots,g_n$. We define the
information-preserving automorphism group of $G$, denoted by
$\Aut_{\mathrm{info}}(G)$, as the set of pairs $(A,\sigma)$, where
$A\in \mathrm{GL}_k(q)$ and $\sigma\in S_n$, such that for every
$j\in[n]$ there exists $\lambda_j\in \F_q^\ast$ with
\[
A g_j=\lambda_j g_{\sigma(j)},
\]
and such that $A$ permutes the one-dimensional subspaces generated by the
canonical basis vectors. Equivalently, there exists a permutation
$\pi\in S_k$ such that
\[
A\langle e_i\rangle=\langle e_{\pi(i)}\rangle
\qquad\text{for every } i\in[k].
\]
We denote by $\Pi(G)\leq S_k$ the permutation group induced on the
information positions.

The following observation shows that the expected value of $\tau_I(G)$ is
constant on the orbits of this group.

\begin{proposition}\label{prop:automorphism-orbits}
Let $G\in \F_q^{k\times n}$, and let $I,J\subseteq[k]$. Suppose that
there exists an element of $\Aut_{\mathrm{info}}(G)$ inducing a permutation
$\pi\in \Pi(G)$ such that
\[
J=\pi(I).
\]
Then, for every $0\leq s\leq n$,
\[
\alpha_I(s)=\alpha_J(s).
\]
Consequently,
\[
\E[\tau_I(G)]=\E[\tau_J(G)].
\]
\end{proposition}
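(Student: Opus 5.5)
The plan is to build an explicit bijection between the $I$-recovering $s$-subsets and the $J$-recovering $s$-subsets, using the automorphism $(A,\sigma)\in\Aut_{\mathrm{info}}(G)$. We then conclude with Lemma~\ref{lem:general-expectation}. By hypothesis we have $Ag_j=\lambda_j g_{\sigma(j)}$ with $\lambda_j\neq 0$ for every $j\in[n]$, and $A\langle e_i\rangle=\langle e_{\pi(i)}\rangle$ for every $i\in[k]$, where $\pi(I)=J$.

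\textbf{Step 1: spans are transported.} For any $S\subseteq[n]$, linearity and invertibility of $A$, together with the fact that nonzero scalars do not change spans, give
\[
A\,\langle g_j : j\in S\rangle=\langle \lambda_j g_{\sigma(j)} : j\in S\rangle=\langle g_{l} : l\in \sigma(S)\rangle .
\]

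\textbf{Step 2: recovery is transported.} Suppose $S$ is $I$-recovering, so $e_i\in\langle g_j:j\in S\rangle$ for all $i\in I$. Applying $A$ shows that $Ae_i$ lies in $\langle g_l:l\in\sigma(S)\rangle$. Since $Ae_i$ is a nonzero multiple of $e_{\pi(i)}$, the vector $e_{\pi(i)}$ lies in that span as well. As $i$ ranges over $I$, $\pi(i)$ ranges over $J$, so $\sigma(S)$ is $J$-recovering. The converse is identical, applying $A^{-1}$ and $\sigma^{-1}$. Note that $A^{-1}$ satisfies $A^{-1}g_{\sigma(j)}=\lambda_j^{-1}g_j$ and $A^{-1}\langle e_{\pi(i)}\rangle=\langle e_i\rangle$, so $(A^{-1},\sigma^{-1})$ is again an information-preserving automorphism inducing $\pi^{-1}$.

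\textbf{Step 3: conclude.} Since $\sigma$ is a permutation of $[n]$, the map $S\mapsto\sigma(S)$ preserves cardinality and is injective. By Step 2 it therefore restricts to a bijection between $I$-recovering and $J$-recovering $s$-subsets, giving $\alpha_I(s)=\alpha_J(s)$ for every $0\le s\le n$. The equality of expectations then follows immediately from formula \eqref{eq:general-expectation}, which depends on $I$ only through the numbers $\alpha_I(s)$.

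There is no real obstacle here. The only point needing care is the scalar bookkeeping: the factors $\lambda_j$ and the scalar relating $Ae_i$ to $e_{\pi(i)}$ must be handled, and this works because spans and membership in subspaces are insensitive to nonzero scalars.
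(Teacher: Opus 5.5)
Your proposal is correct and follows essentially the same route as the paper: you transport spans by $A$ to show that $S$ is $I$-recovering if and only if $\sigma(S)$ is $J$-recovering, use $S\mapsto\sigma(S)$ as a cardinality-preserving bijection to get $\alpha_I(s)=\alpha_J(s)$, and conclude with Lemma~\ref{lem:general-expectation}. Your explicit check that $(A^{-1},\sigma^{-1})$ handles the converse direction makes the bijection slightly more careful than the paper's one-line ``this is equivalent to'' step.
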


\begin{proof}
Let $(A,\sigma)\in\Aut_{\mathrm{info}}(G)$ induce the permutation
$\pi\in S_k$, and assume that $J=\pi(I)$. For a subset
$S\subseteq[n]$, we have 
\[
\{e_i:i\in I\}\subseteq \langle\{g_\ell:\ell\in S\}\rangle
\]
if and only if
\[
A\langle e_i:i\in I\rangle
\subseteq
A\langle \{g_\ell:\ell\in S\}\rangle.
\]
Since $A$ maps the projective points $\langle e_i\rangle$, $i\in I$,
onto the projective points $\langle e_j\rangle$, $j\in J$, and maps the
columns indexed by $S$ onto scalar multiples of the columns indexed by
$\sigma(S)$, this is equivalent to
\[
\{e_j:j\in J\}\subseteq \langle\{g_\ell:\ell\in \sigma(S)\}\rangle.
\]
Thus $S$ is $I$-recovering if and only if $\sigma(S)$ is
$J$-recovering. Since $\sigma$ preserves cardinalities, it gives a
bijection between $I$-recovering subsets of size $s$ and $J$-recovering
subsets of size $s$. Hence $\alpha_I(s)=\alpha_J(s)$ for all $s$, and
the equality of expectations follows from Lemma~\ref{lem:general-expectation}.
\end{proof}

As a consequence, the computation of $\Tmax^{(m)}(G)$ and
$\Tavg^{(m)}(G)$ can be reduced to orbit representatives.

\begin{corollary}\label{cor:orbit-reduction}
Let
\[
\cI_m=\{I\subseteq[k]: |I|=m\},
\]
and let
\[
\cI_m=\mathcal O_1\cup\cdots\cup\mathcal O_r
\]
be the decomposition into orbits under the action of $\Pi(G)$. Choose one
representative $I_a\in\mathcal O_a$ for each $a=1,\ldots,r$. Then
\[
\Tmax^{(m)}(G)
=
\max_{1\leq a\leq r}
\E[\tau_{I_a}(G)]
\]
and
\[
\Tavg^{(m)}(G)
=
\frac{1}{\binom{k}{m}}
\sum_{a=1}^r
|\mathcal O_a|\,\E[\tau_{I_a}(G)].
\]
\end{corollary}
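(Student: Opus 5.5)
The plan is to apply Proposition~\ref{prop:automorphism-orbits} orbit by orbit, and then regroup the maximum and the sum defining $\Tmax^{(m)}(G)$ and $\Tavg^{(m)}(G)$ according to the orbit decomposition of $\cI_m$.

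First, I would check that the action of $\Pi(G)$ on subsets of $[k]$ preserves cardinality. Each $\pi\in\Pi(G)$ is a permutation of $[k]$, so $|\pi(I)|=|I|$, and $\Pi(G)$ acts on $\cI_m$. It is a group, since composing the pairs $(A,\sigma)$ composes the induced permutations. Hence $\cI_m$ is partitioned into the orbits $\mathcal O_1,\ldots,\mathcal O_r$.

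Next, I would show that $\E[\tau_I(G)]$ is constant on each orbit. For $J\in\mathcal O_a$ there is $\pi\in\Pi(G)$ with $J=\pi(I_a)$. By definition of $\Pi(G)$, such a $\pi$ is induced by some element of $\Aut_{\mathrm{info}}(G)$. Proposition~\ref{prop:automorphism-orbits} then gives $\E[\tau_J(G)]=\E[\tau_{I_a}(G)]$. So the function $I\mapsto\E[\tau_I(G)]$ on $\cI_m$ takes the value $\E[\tau_{I_a}(G)]$ throughout $\mathcal O_a$.

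Finally, I would regroup. The maximum over $\cI_m=\bigsqcup_a\mathcal O_a$ equals the maximum over $a$ of the common values, which proves the first formula. For the second, write
\[
\sum_{I\in\cI_m}\E[\tau_I(G)]=\sum_{a=1}^r\sum_{I\in\mathcal O_a}\E[\tau_I(G)]=\sum_{a=1}^r|\mathcal O_a|\,\E[\tau_{I_a}(G)],
\]
and divide by $\binom km$. There is no real obstacle here. The only point requiring care is to confirm that "orbit under $\Pi(G)$" matches the hypothesis of Proposition~\ref{prop:automorphism-orbits}, namely that every element of $\Pi(G)$ arises from some element of $\Aut_{\mathrm{info}}(G)$; this holds because $\Pi(G)$ is defined as the set of permutations induced by $\Aut_{\mathrm{info}}(G)$.
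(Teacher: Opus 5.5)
Your proposal is correct and follows the route the paper intends. The paper states this corollary without a separate proof, as an immediate consequence of Proposition~\ref{prop:automorphism-orbits}: $\E[\tau_I(G)]$ is constant on each $\Pi(G)$-orbit, and one then regroups the maximum and the sum over the partition $\cI_m=\mathcal O_1\cup\cdots\cup\mathcal O_r$. Your extra checks, that $\Pi(G)$ acts on $\cI_m$ and that every $\pi\in\Pi(G)$ is induced by some element of $\Aut_{\mathrm{info}}(G)$, are exactly the details the paper leaves implicit.
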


In particular, if $\Pi(G)$ is transitive on the $m$-subsets of $[k]$,
then the generalized random access variable depends only on $m$, and not on
the particular choice of $I$.

\begin{corollary}\label{cor:m-homogeneous}
Suppose that the induced group $\Pi(G)$ is $m$-homogeneous, i.e. it acts
transitively on the subsets of $[k]$ of cardinality $m$. Then
\[
\E[\tau_I(G)]
\]
is the same for every $I\subseteq[k]$ with $|I|=m$. In particular,
\[
\Tmax^{(m)}(G)=\Tavg^{(m)}(G)=\E[\tau_I(G)]
\]
for any $I\in\cI_m$.
\end{corollary}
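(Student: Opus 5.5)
The plan is to reduce everything to Proposition~\ref{prop:automorphism-orbits}.

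Fix two subsets $I,J\in\cI_m$. By the $m$-homogeneity hypothesis on $\Pi(G)$, there is a permutation $\pi\in\Pi(G)$ with $\pi(I)=J$. By the definition of $\Pi(G)$ as the permutation group induced on the information positions, $\pi$ is induced by some pair $(A,\sigma)\in\Aut_{\mathrm{info}}(G)$. This is exactly the hypothesis of Proposition~\ref{prop:automorphism-orbits}, which then gives $\alpha_I(s)=\alpha_J(s)$ for all $s$ and hence
\[
\E[\tau_I(G)]=\E[\tau_J(G)].
\]
Since $I$ and $J$ were arbitrary in $\cI_m$, the function $I\mapsto \E[\tau_I(G)]$ is constant on $\cI_m$; call its common value $c$.

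For the second assertion, I will argue directly from the definitions. The quantity $\Tmax^{(m)}(G)$ is the maximum of a constant family, so it equals $c$. The quantity $\Tavg^{(m)}(G)$ is the average of $\binom{k}{m}$ copies of $c$, so it also equals $c$.

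Alternatively, I can invoke Corollary~\ref{cor:orbit-reduction}. Transitivity means there is a single orbit $\mathcal O_1=\cI_m$, with $|\mathcal O_1|=\binom km$, and both formulas in that corollary collapse to $\E[\tau_{I_1}(G)]$.

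I do not expect any real obstacle. The only point needing care is to note that the element of $\Pi(G)$ supplied by transitivity does come from an element of $\Aut_{\mathrm{info}}(G)$, and this holds by the very definition of $\Pi(G)$.
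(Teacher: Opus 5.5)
Your proposal is correct and matches the paper's argument. The paper states this corollary without a separate proof, as an immediate consequence of Proposition~\ref{prop:automorphism-orbits}, equivalently of Corollary~\ref{cor:orbit-reduction} with the single orbit $\mathcal O_1=\cI_m$; your observation that the permutation supplied by transitivity is induced by an element of $\Aut_{\mathrm{info}}(G)$, by definition of $\Pi(G)$, is the only point that needed checking.
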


\begin{remark}
This is one of the reasons why highly symmetric encoders are particularly
tractable. If the induced automorphism group is $m$-homogeneous for every
$m$, then the whole generalized random access problem is governed by a
single sequence
\[
E_m(G):=\E[\tau_I(G)],\qquad |I|=m,
\]
with
\[
E_1(G)\leq E_2(G)\leq \cdots \leq E_k(G).
\]
Thus the interpolation between singleton random access and full recovery can
be studied without distinguishing among different subsets $I$ of the same
cardinality.
\end{remark}

\subsection{Two elementary examples}\label{subsec:elementary-examples}

We include two elementary examples which clarify the role of the notation
introduced above.  The first one shows that the generalized random access
variable may depend on the chosen set $I$, even when $|I|$ is fixed.  The
second one gives a closed formula for a simple family of systematic codes.

\begin{example}\label{ex:asymmetric}
Let
\[
G=\bigl(e_1,e_2,\ldots,e_k,e_1\bigr)\in \F_q^{k\times (k+1)}.
\]
Thus the first information coordinate appears twice among the columns, while
each of the other information coordinates appears only once.  For singleton
recovery, we have
\[
\E[\tau_{\{1\}}(G)]=\frac{k+1}{2},
\]
because one has to sample one of two columns equal to $e_1$.  On the other
hand, for every $i\in\{2,\ldots,k\}$,
\[
\E[\tau_{\{i\}}(G)]=k+1,
\]
because the only way to recover $e_i$ is to sample the unique column equal
to $e_i$.

Therefore the expected value of the random access variable is not determined
only by the cardinality of $I$.  In particular,
\[
T_{\max}^{(1)}(G)=k+1
\qquad\text{and}\qquad
T_{\mathrm{avg}}^{(1)}(G)
=
\frac{1}{k}
\left(
\frac{k+1}{2}+(k-1)(k+1)
\right).
\]
This example illustrates why it is natural to distinguish between the maximum
and the average quantities.
\end{example}

\begin{example}\label{ex:single-parity}
Let
\[
G=\bigl(I_k\mid \mathbf 1\bigr)\in \F_q^{k\times (k+1)},
\]
where $\mathbf 1=e_1+\cdots+e_k$.  We assume that
$\operatorname{char}(\F_q)$ is arbitrary; the following argument only uses
the fact that the last column is the sum of all systematic columns.

Let $I\subseteq[k]$ with $|I|=m$.  We compute the corresponding numbers
$\alpha_I(s)$.  If the parity column is not selected, then a set of columns
recovers the vectors $e_i$, $i\in I$, if and only if it contains all the
systematic columns indexed by $I$.  If the parity column is selected, then
the set also recovers all information symbols whenever all but at most one of
the systematic columns have been selected.

Hence, for $0\leq s\leq k+1$, one obtains
\[
\alpha_I(s)=
\begin{cases}
0, & 0\leq s<m,\\[2mm]
\binom{k-m+1}{s-m}, & m\leq s\leq k-1,\\[2mm]
k+1, & s=k,\\[2mm]
1, & s=k+1.
\end{cases}
\]
Therefore, by Lemma~\ref{lem:general-expectation},
\[
\E[\tau_I(G)]
=
(k+1)H_{k+1}
-
\sum_{s=m}^{k-1}
\frac{\binom{k-m+1}{s-m}}{\binom{k}{s}}
-
(k+1).
\]
Equivalently,
\[
\E[\tau_I(G)]
=
(k+1)(H_{k+1}-1)
-
\sum_{s=m}^{k-1}
\frac{\binom{k-m+1}{s-m}}{\binom{k}{s}}.
\]

For $m=1$, this gives
\[
\E[\tau_i(G)]=k
\]
for every $i\in[k]$, recovering the known singleton behaviour of the
systematic single-parity code.  For $m=k$, the expression gives
\[
\E[\tau_{[k]}(G)]
=
(k+1)(H_{k+1}-1),
\]
which is the expected number of samples needed to collect any $k$ out of the
$k+1$ columns.
\end{example}

\section{General bounds}\label{sec:bounds}

The expectation formula of Lemma~\ref{lem:general-expectation} is exact, but it requires detailed knowledge of all the numbers $\alpha_I(s)$.  In general, this is a difficult finite-geometric counting problem.  We therefore collect bounds which depend only on coarser information, such as the size of minimal recovering sets and the order in which individual coordinates become recoverable.  These bounds hold for arbitrary full-rank generator matrices and generalize the bounds for the singleton random-access problem, where one considers only the case $m=1$; see, for instance, the parameters $\Tmax$ and $\Tavg$ and the singleton lower bounds in \cite{bar2024cover}.

\begin{definition}\label{def:minimal-recovering-sets}
Let $G\in \F_q^{k\times n}$ have rank $k$, and let $\emptyset\neq I\subseteq [k]$.  A subset $A\subseteq [n]$ is called an $I$-recovering set if
\[
\{e_i:i\in I\}\subseteq \langle\{g_j:j\in A\}\rangle.
\]
It is called minimal if no proper subset of $A$ is $I$-recovering.  We denote by $\cD_I(G)$ the family of minimal $I$-recovering sets, and we set
\[
\rho_I(G):=\min\{|A|:A\in \cD_I(G)\}.
\]
For fixed $m$, we also write
\[
\rho_{\max}^{(m)}(G):=\max_{I\in \cI_m}\rho_I(G).
\]
\end{definition}

Since the vectors $\{e_i:i\in I\}$ are linearly independent, every $I$-recovering set has size at least $|I|$.  Since $G$ has rank $k$, there is always a set of at most $k$ columns spanning the whole ambient space.  Therefore,
\[
|I|\leq \rho_I(G)\leq k.
\]

\begin{proposition}\label{prop:pointwise-bounds}
Let $G\in \F_q^{k\times n}$ have rank $k$, and let $\emptyset\neq I\subseteq [k]$.  Then
\begin{equation}\label{eq:pointwise-rho-bound}
n\bigl(H_n-H_{n-\rho_I(G)}\bigr)
\leq
\E[\tau_I(G)]
\leq
nH_{\rho_I(G)}.
\end{equation}
In particular, if $|I|=m$, then
\begin{equation}\label{eq:pointwise-m-bound}
n\bigl(H_n-H_{n-m}\bigr)
\leq
\E[\tau_I(G)]
\leq
nH_k.
\end{equation}
Moreover,
\begin{equation}\label{eq:tmax-upper-rho}
\Tmax^{(m)}(G)\leq nH_{\rho_{\max}^{(m)}(G)}\leq nH_k.
\end{equation}
If $G$ is systematic, then $\rho_I(G)=|I|$ for every $I\subseteq [k]$, and hence
\begin{equation}\label{eq:systematic-upper}
\Tmax^{(m)}(G)\leq nH_m.
\end{equation}
\end{proposition}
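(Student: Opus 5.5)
The plan is a pathwise comparison of $\tau_I(G)$ with two coupon-collector waiting times; Lemma~\ref{lem:general-expectation} is not needed. Write $\rho:=\rho_I(G)$, and let $\eta_r:=|S_r|$ be the number of distinct columns observed after $r$ samples.

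\emph{Lower bound.} Every $I$-recovering set contains a minimal one, so it has at least $\rho$ elements. Hence, at time $r=\tau_I(G)$, the set $S_r$ is $I$-recovering and satisfies $\eta_r\ge\rho$. This gives the pointwise inequality
\[
\tau_I(G)\ \ge\ \sigma_\rho:=\min\{r:\eta_r\ge\rho\}.
\]
When $j$ distinct columns have been seen, the waiting time for a new one is geometric with success probability $(n-j)/n$. Therefore
\[
\E[\sigma_\rho]=\sum_{j=0}^{\rho-1}\frac{n}{n-j}=n\bigl(H_n-H_{n-\rho}\bigr).
\]

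\emph{Upper bound.} Fix $A\in\cD_I(G)$ with $|A|=\rho$. Once all columns indexed by $A$ have been sampled, $I$ is recovered. So $\tau_I(G)\le\sigma_A$, the first time all elements of $A$ appear among $X_1,X_2,\ldots$. Once $\rho-t$ elements of $A$ have been seen, the waiting time for a new element of $A$ is geometric with parameter $t/n$. Therefore
\[
\E[\sigma_A]=\sum_{t=1}^{\rho}\frac{n}{t}=nH_\rho,
\]
which proves \eqref{eq:pointwise-rho-bound}.

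\emph{The consequences.} Both bounds are monotone in $\rho$: $H_\rho$ is increasing in $\rho$, and so is $H_n-H_{n-\rho}=\sum_{j=n-\rho+1}^{n}1/j$. Combining this with the already noted inequalities $m=|I|\le\rho\le k$ gives \eqref{eq:pointwise-m-bound}.

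Taking the maximum over $I\in\cI_m$ of the upper bound $nH_{\rho_I(G)}$, and using monotonicity of $H$ together with $\rho^{(m)}_{\max}(G)\le k$, gives \eqref{eq:tmax-upper-rho}.

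For systematic $G=(I_k\mid A)$, the set of systematic positions $I$ itself is $I$-recovering, since $g_i=e_i$. Hence $\rho_I(G)\le|I|$, and together with $\rho_I(G)\ge|I|$ this gives equality. Then \eqref{eq:tmax-upper-rho} yields \eqref{eq:systematic-upper}.

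The only step needing care is the justification of the two geometric-waiting-time decompositions. It is the standard coupon-collector argument: after each new distinct column appears, the sampling process restarts by independence, and this uses that the sampling is with replacement. Everything else is monotonicity bookkeeping.
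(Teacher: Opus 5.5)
Your proposal is correct and follows essentially the same route as the paper: it bounds $\tau_I(G)$ pathwise below by the time to observe $\rho_I(G)$ distinct columns and above by the time to collect a fixed minimal recovering set of size $\rho_I(G)$, and then uses monotonicity of the harmonic expressions together with $|I|\le\rho_I(G)\le k$. You add some detail the paper leaves implicit, namely the explicit geometric waiting-time sums and the monotonicity of $H_n-H_{n-\rho}$ in $\rho$, but the argument is the same.
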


\begin{proof}
Let $\rho:=\rho_I(G)$.  Before $\rho$ distinct columns have been observed, no $I$-recovering set can be contained in the observed set.  Hence $\tau_I(G)$ is at least the coupon-collector variable which counts the number of samples needed to observe $\rho$ distinct columns among $n$.  Its expectation is $n(H_n-H_{n-\rho})$, proving the lower bound in \eqref{eq:pointwise-rho-bound}.

For the upper bound, choose a minimal $I$-recovering set $A$ of size $\rho$.  Once all columns indexed by $A$ have been sampled, all symbols indexed by $I$ can be recovered.  The expected number of samples needed to collect this fixed set $A$ is $nH_\rho$.  Therefore $\E[\tau_I(G)]\leq nH_\rho$.

The inequalities in \eqref{eq:pointwise-m-bound} follow from $m\leq \rho_I(G)\leq k$.  Taking the maximum over all $I\in \cI_m$ gives \eqref{eq:tmax-upper-rho}.  Finally, if $G$ is systematic, the $m$ systematic columns indexed by $I$ recover precisely the $m$ requested independent vectors, so $\rho_I(G)=m$.
\end{proof}

The next result extends the lower bounds for the singleton parameter $\Tmax$ to all values of $m$.  The proof uses the order statistics of the one-coordinate variables.

\begin{theorem}\label{thm:universal-lower-bounds}
Let $G\in \F_q^{k\times n}$ have rank $k$, and let $1\leq m\leq k$.  Then
\begin{equation}\label{eq:rate-sensitive-lower}
\Tmax^{(m)}(G)
\geq
\Tavg^{(m)}(G)
\geq
L(n,k,m),
\end{equation}
where
\begin{equation}\label{eq:def-Lnkm}
L(n,k,m):=
\frac{n}{\binom{k}{m}}
\sum_{j=m}^{k}
\binom{j-1}{m-1}
\bigl(H_n-H_{n-j}\bigr).
\end{equation}
Consequently,
\begin{equation}\label{eq:combinatorial-lower}
\Tmax^{(m)}(G)
\geq
\Tavg^{(m)}(G)
\geq
\frac{m(k+1)}{m+1}.
\end{equation}
\end{theorem}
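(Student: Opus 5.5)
Let $T_i:=\tau_{\{i\}}(G)$ for $i\in[k]$. For each nonempty $I$ we have $\tau_I(G)=\max_{i\in I}T_i$ pointwise: the requested vectors lie in the observed span exactly when each $e_i$ does, and the observed sets $S_r$ increase with $r$. The first inequality $\Tmax^{(m)}\geq\Tavg^{(m)}$ is trivial, since a maximum dominates an average. For the main bound I would work pointwise with the order statistics $T_{(1)}\leq\cdots\leq T_{(k)}$ of $T_1,\dots,T_k$, with ties broken by index. The maximum of $T_i$ over a uniformly random $m$-subset $I$ equals $T_{(j)}$ exactly when $I$ contains the $j$-th ranked index and $m-1$ indices of lower rank. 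There are $\binom{j-1}{m-1}$ such subsets, so for every realization
\[
\sum_{I\in\cI_m}\tau_I(G)=\sum_{j=m}^{k}\binom{j-1}{m-1}T_{(j)}.
\]
Taking expectations and dividing by $\binom km$ gives $\Tavg^{(m)}(G)=\binom km^{-1}\sum_j\binom{j-1}{m-1}\E[T_{(j)}]$.

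The key step is the bound $\E[T_{(j)}]\geq n(H_n-H_{n-j})$, and I would prove it by a dimension argument. At the time $T_{(j)}$, at least $j$ distinct canonical vectors $e_i$ lie in the span of the observed columns, so that span has dimension at least $j$. Hence at least $j$ distinct columns have been observed, because repetitions add nothing to the dimension. Therefore $T_{(j)}$ is pointwise at least the coupon-collector time needed to see $j$ distinct columns out of $n$, whose expectation is $n(H_n-H_{n-j})$, exactly as in Proposition~\ref{prop:pointwise-bounds}. Substituting this bound gives $\Tavg^{(m)}(G)\geq L(n,k,m)$.

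For \eqref{eq:combinatorial-lower} I would use $H_n-H_{n-j}=\sum_{t=0}^{j-1}\frac1{n-t}\geq \frac jn$, which gives
\[
L(n,k,m)\geq \binom km^{-1}\sum_{j=m}^k j\binom{j-1}{m-1}=\binom km^{-1}m\sum_{j=m}^k\binom jm=\binom km^{-1}m\binom{k+1}{m+1}=\frac{m(k+1)}{m+1},
\]
where the middle steps use $j\binom{j-1}{m-1}=m\binom jm$ and the hockey-stick identity.

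The main point needing care is the combinatorial identity relating the sum over $m$-subsets to the order statistics, in particular the handling of ties. Ties are harmless because the identity holds for any fixed tie-breaking ranking, since the maximum value is the same whichever tied index is chosen. The dimension argument for $T_{(j)}$ is short but is the real content of the theorem. The rest is routine.
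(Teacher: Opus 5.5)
Your proposal is correct and follows essentially the same route as the paper: the pointwise order-statistics identity for $\sum_{I\in\cI_m}\tau_I(G)$, then the dimension argument $\tau_{(j)}\geq C_j$ with $\E[C_j]=n(H_n-H_{n-j})$, then $n(H_n-H_{n-j})\geq j$. The only differences are presentational: you make the tie-breaking explicit and verify the value $m(k+1)/(m+1)$ via $j\binom{j-1}{m-1}=m\binom{j}{m}$ and the hockey-stick identity, whereas the paper just asserts it as the average maximum of an $m$-subset of $[k]$.
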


\begin{proof}
For a fixed infinite sampling sequence, write
\[
\tau_i:=\tau_{\{i\}}(G),\qquad i\in [k],
\]
and let
\[
\tau_{(1)}\leq \tau_{(2)}\leq \cdots\leq \tau_{(k)}
\]
be the corresponding order statistics.  Let $C_j$ be the number of samples needed to observe $j$ distinct columns.  Before the sample in which $C_j$ is reached, the observed columns span a space of dimension at most $j-1$, and therefore cannot contain $j$ linearly independent canonical vectors.  Hence
\[
\tau_{(j)}\geq C_j
\qquad\text{for every }j\in [k].
\]
Moreover,
\[
\E[C_j]=n(H_n-H_{n-j}).
\]

Since
\[
\tau_I(G)=\max_{i\in I}\tau_i(G),
\]
we have, for each fixed sampling sequence,
\[
\frac{1}{\binom{k}{m}}\sum_{I\in \cI_m}\tau_I(G)
=
\frac{1}{\binom{k}{m}}
\sum_{j=m}^{k}\binom{j-1}{m-1}\tau_{(j)}.
\]
Indeed, $\tau_{(j)}$ is the maximum of an $m$-subset exactly when the remaining $m-1$ elements are chosen among the previous $j-1$ order statistics.  Taking expectations and using $\tau_{(j)}\geq C_j$ gives \eqref{eq:rate-sensitive-lower}.

Finally, since $n(H_n-H_{n-j})\geq j$, we obtain
\[
L(n,k,m)
\geq
\frac{1}{\binom{k}{m}}
\sum_{j=m}^{k}\binom{j-1}{m-1}j.
\]
The last expression is the average maximum of an $m$-subset of $[k]$, which is
\[
\frac{m(k+1)}{m+1}.
\]
This proves \eqref{eq:combinatorial-lower}.
\end{proof}

\begin{remark}\label{rem:bounds-extremes}
For $m=1$, Theorem~\ref{thm:universal-lower-bounds} gives
\[
\Tmax^{(1)}(G)
\geq
\Tavg^{(1)}(G)
\geq
\frac{n}{k}\sum_{j=1}^{k}\bigl(H_n-H_{n-j}\bigr)
=
\frac{n}{k}\sum_{a=0}^{k-1}\frac{k-a}{n-a}.
\]
Equivalently,
\[
\Tmax^{(1)}(G)
\geq
n-\frac{n(n-k)}{k}\bigl(H_n-H_{n-k}\bigr),
\]
which is the rate-sensitive singleton lower bound of \cite{bar2024cover} in the present notation.  The weaker bound \eqref{eq:combinatorial-lower} becomes
\[
\Tmax^{(1)}(G)\geq \frac{k+1}{2},
\]
again recovering the corresponding singleton bound.

For $m=k$, Theorem~\ref{thm:universal-lower-bounds} gives
\[
\Tmax^{(k)}(G)=\Tavg^{(k)}(G)
\geq n(H_n-H_{n-k}),
\]
which is the usual lower bound for full recovery.  Thus the bounds interpolate between the singleton random-access regime and the full-recovery coverage-depth regime.
\end{remark}

Combining the lower bounds with the systematic upper bound gives the following general estimate for the extremal quantities.

\begin{corollary}\label{cor:extremal-general-bounds}
For every prime power $q$ and every $1\leq m\leq k\leq n$,
\[
L(n,k,m)
\leq
T_q^{\mathrm{avg}}(n,k,m)
\leq
T_q^{\max}(n,k,m)
\leq
nH_m.
\]
If a systematic MDS encoder with parameters $[n,k]$ over $\F_q$ exists, then the last upper bound can be improved to the MDS value given in Theorem~\ref{thm:mds-general}.
\end{corollary}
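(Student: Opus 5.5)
The statement chains three inequalities, and I would prove each separately.

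\emph{Lower bound.} For every rank-$k$ matrix $G\in\F_q^{k\times n}$, Theorem~\ref{thm:universal-lower-bounds} gives $\Tavg^{(m)}(G)\geq L(n,k,m)$. The quantity $L(n,k,m)$ does not depend on $G$, so taking the minimum over all such $G$ yields $T_q^{\mathrm{avg}}(n,k,m)\geq L(n,k,m)$.

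\emph{Middle inequality.} The average of finitely many numbers never exceeds their maximum, so $\Tavg^{(m)}(G)\leq\Tmax^{(m)}(G)$ for every $G$. Minimizing over $G$ preserves this: choose $G^\ast$ attaining the minimum of $\Tmax^{(m)}$. The minimum exists because the set of $k\times n$ matrices over $\F_q$ is finite. Then
\[
T_q^{\mathrm{avg}}(n,k,m)\leq\Tavg^{(m)}(G^\ast)\leq\Tmax^{(m)}(G^\ast)=T_q^{\max}(n,k,m).
\]

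\emph{Upper bound.} Since $k\leq n$, the systematic matrix $G=(I_k\mid A)$ with, say, $A=0$ has rank $k$ and is admissible. By \eqref{eq:systematic-upper} in Proposition~\ref{prop:pointwise-bounds}, $\Tmax^{(m)}(G)\leq nH_m$. Hence $T_q^{\max}(n,k,m)\leq nH_m$.

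\emph{MDS refinement.} If a systematic MDS encoder exists, the same argument applies: use it as the competitor $G$ in the minimum, and the bound becomes $T_q^{\max}(n,k,m)\leq\Tmax^{(m)}(G)$. Here $\Tmax^{(m)}(G)$ equals the common value $\E[\tau_I(G)]$ computed in Theorem~\ref{thm:mds-general}. That value is independent of $I$ by the symmetry of the MDS counts $\alpha_I(s)$, which depend only on $|I|$.

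There is no real obstacle in this argument. The only point needing care is that the extremal quantities are minima over a nonempty finite set, which is why the rank-$k$ systematic matrix must be exhibited explicitly.
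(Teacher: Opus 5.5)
Your proof is correct and follows the paper's argument: the lower bound comes from Theorem~\ref{thm:universal-lower-bounds}, the upper bound from \eqref{eq:systematic-upper} applied to a systematic competitor, and the refinement from using the MDS encoder as a competitor via Theorem~\ref{thm:mds-general}. You also make explicit the nonemptiness and finiteness bookkeeping that the paper leaves implicit. One small addition: the MDS value really is an improvement, i.e.\ $M_{n,k}(m)\le nH_m$, because the MDS encoder is itself systematic and so \eqref{eq:systematic-upper} applies to it.
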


We also record the following exact formula, which generalizes the disjoint-minimal-retrieval-set computation for one information symbol.

\begin{proposition}\label{prop:disjoint-recovering-sets}
Let $G\in \F_q^{k\times n}$ have rank $k$, let $\emptyset\neq I\subseteq [k]$, and suppose that
\[
\cD_I(G)=\{A_1,\ldots,A_v\}
\]
with the sets $A_1,
\ldots,A_v$ mutually disjoint.  Then
\begin{equation}\label{eq:disjoint-recovering-exact}
\E[\tau_I(G)]
=
n\sum_{\emptyset\neq J\subseteq [v]}
(-1)^{|J|+1}
H_{\sum_{j\in J}|A_j|}.
\end{equation}
In particular, if $v=1$, then $\E[\tau_I(G)]=nH_{|A_1|}$.
\end{proposition}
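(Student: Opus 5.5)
The plan is to write $\tau_I(G)$ as a minimum of coupon-collector times and then apply inclusion--exclusion. For each $j\in[v]$, let $T_j$ be the number of samples needed until every column indexed by $A_j$ has been observed. Since the family of $I$-recovering sets is upward closed (Proposition~\ref{prop:alpha-properties}(v)), every $I$-recovering set contains a minimal one. Hence, after $r$ samples, $S_r$ is $I$-recovering if and only if $A_j\subseteq S_r$ for some $j$. This gives the pointwise identity
\[
\tau_I(G)=\min_{1\leq j\leq v} T_j .
\]

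Next we use the max--min inclusion--exclusion identity for real numbers:
\[
\min_{j\in[v]} T_j=\sum_{\emptyset\neq J\subseteq[v]}(-1)^{|J|+1}\max_{j\in J}T_j .
\]
It holds pointwise, for instance by applying the identity $\mathbf 1[\min_j T_j\ge t]=\prod_j \mathbf 1[T_j\ge t]$ to complements and then summing over $t\geq 1$. Taking expectations term by term is legitimate because each variable is integrable, being dominated by the full coupon-collector time.

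It then remains to compute $\E[\max_{j\in J}T_j]$. Here $\max_{j\in J}T_j$ is exactly the number of samples needed to observe every column of the union $\bigcup_{j\in J}A_j$. Because the $A_j$ are mutually disjoint, this union has size $\sum_{j\in J}|A_j|$. By the standard fact already used in Proposition~\ref{prop:pointwise-bounds}, collecting a fixed set of $a$ columns out of $n$ takes expected time $nH_a$. Substituting gives \eqref{eq:disjoint-recovering-exact}, and the case $v=1$ is immediate.

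The only delicate point is the reduction to the minimum over minimal sets, namely that every recovering set contains a member of $\cD_I(G)$. This holds because $[n]$ is finite and recovering sets are upward closed. Disjointness is used solely to evaluate the size of the unions; without it the same argument goes through with $|\bigcup_{j\in J}A_j|$ in place of $\sum_{j\in J}|A_j|$.
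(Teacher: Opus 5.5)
Your proposal is correct and follows essentially the same route as the paper. Both proofs write $\tau_I(G)$ as the minimum of the collection times of $A_1,\ldots,A_v$, apply inclusion--exclusion to this minimum, and identify each term with the coupon-collector time of the disjoint union $\bigcup_{j\in J}A_j$, whose expectation is $nH_{\sum_{j\in J}|A_j|}$; you also spell out why every recovering set contains a member of $\cD_I(G)$, which the paper leaves implicit.
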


\begin{proof}
Let $Y_A$ denote the number of samples needed to collect all columns in a fixed set $A\subseteq [n]$.  Since the sets $A_1,\ldots,A_v$ are disjoint, the event that all sets indexed by $J\subseteq [v]$ have been collected is the event that the fixed set $\bigcup_{j\in J}A_j$ has been collected.  The expectation of the corresponding coupon-collector variable is
\[
nH_{|\cup_{j\in J}A_j|}
=
nH_{\sum_{j\in J}|A_j|}.
\]
The variable $\tau_I(G)$ is the minimum of the variables $Y_{A_1},\ldots,Y_{A_v}$.  Applying inclusion--exclusion to the tail-sum formula for this minimum gives \eqref{eq:disjoint-recovering-exact}.
\end{proof}

\section{Systematic MDS encoders}\label{sec:mds}

In this section $G\in \F_q^{k\times n}$ is assumed to be a systematic generator matrix of an MDS code, so that, after a possible permutation of columns,
\[
G=(I_k\mid A).
\]
This hypothesis is natural for random access to information symbols: the vectors $e_1,\ldots,e_k$ appear among the columns and represent the uncoded information positions.

For $1\leq m\leq k$, define
\begin{equation}\label{eq:def-mds-value}
M_{n,k}(m):=
k+\sum_{s=1}^{k-1}\frac{s}{n-s}
\left(1-\frac{\binom{s-1}{m-1}}{\binom{n-1}{m-1}}\right),
\end{equation}
where, as usual, $\binom{a}{b}=0$ when $b>a$.

{The following expression is algebraically equivalent to the systematic-MDS formula in \cite[Propositions~23 and~24]{BarLev2026CodedRetrieval}.  We record it in a form suited to varying the cardinality $m$ and to the uniform parameters $\Tmax^{(m)}$ and $\Tavg^{(m)}$.}

\begin{theorem}\label{thm:mds-general}
Let $G\in \F_q^{k\times n}$ be a systematic generator matrix of an MDS code and let $I\in \cI_m$, with $1\leq m\leq k$.  Then
\begin{equation}\label{eq:mds-general}
\E[\tau_I(G)]
=
M_{n,k}(m).
\end{equation}
Consequently,
\[
\Tmax^{(m)}(G)=\Tavg^{(m)}(G)=M_{n,k}(m).
\]
\end{theorem}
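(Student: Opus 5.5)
We compute $\alpha_I(s)$ for every $s$ and substitute into Lemma~\ref{lem:general-expectation}. For a systematic MDS encoder every $k$ columns are linearly independent, so any set of $s\geq k$ columns spans $\F_q^k$, giving $\alpha_I(s)=\binom ns$ for $s\geq k$. For $s<k$, Remark~\ref{rem:dual-mds} (that is, Proposition~\ref{prop:dual-distance-alpha} with $d^\perp=k+1$) gives $\alpha_I(s)=\binom{n-m}{s-m}$, and this equals $0$ when $s<m$. We can also see this directly: a set $S$ of $s<k$ columns is linearly independent, and so is $S\cup\{i\}$ for $i\notin S$. Hence $e_i\in\langle g_j:j\in S\rangle$ forces $i\in S$.

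Substituting these values, we split the sum in \eqref{eq:general-expectation} at $s=k$. The range $k\leq s\leq n-1$ contributes
\[
\sum_{s=k}^{n-1}\frac{\binom ns}{\binom{n-1}{s}}=\sum_{s=k}^{n-1}\frac{n}{n-s}=nH_{n-k}.
\]
Hence
\[
\E[\tau_I(G)]=n(H_n-H_{n-k})-\sum_{s=m}^{k-1}\frac{\binom{n-m}{s-m}}{\binom{n-1}{s}}.
\]
Next we rewrite $n(H_n-H_{n-k})=\sum_{s=0}^{k-1}\frac{n}{n-s}=k+\sum_{s=1}^{k-1}\frac{s}{n-s}$, using $\frac{n}{n-s}=1+\frac{s}{n-s}$. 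This accounts for the leading $k+\sum_s\frac{s}{n-s}$ in \eqref{eq:def-mds-value}.

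It remains to verify the binomial identity
\[
\frac{\binom{n-m}{s-m}}{\binom{n-1}{s}}=\frac{s}{n-s}\cdot\frac{\binom{s-1}{m-1}}{\binom{n-1}{m-1}},\qquad m\leq s\leq k-1.
\]
Both sides vanish for $s<m$, so the sum may run over $1\leq s\leq k-1$. I would check the identity by expanding into factorials. The left side is $\frac{(n-m)!\,s!\,(n-1-s)!}{(s-m)!\,(n-s)!\,(n-1)!}=\frac{(n-m)!\,s!}{(s-m)!\,(n-s)\,(n-1)!}$. The right side is $\frac{s}{n-s}\cdot\frac{(s-1)!\,(n-m)!}{(s-m)!\,(n-1)!}$, which is the same expression. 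This algebraic matching is the only real obstacle, and a routine one. Once it holds, $\E[\tau_I(G)]=M_{n,k}(m)$.

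Because the value depends only on $m$, the claim $\Tmax^{(m)}(G)=\Tavg^{(m)}(G)=M_{n,k}(m)$ follows at once from the definitions, with no automorphism argument needed. As sanity checks, $m=1$ should recover the known singleton MDS formula, and $m=k$ gives $n(H_n-H_{n-k})$, consistent with the lower bound in Remark~\ref{rem:bounds-extremes}.
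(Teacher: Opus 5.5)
Your proof is correct and follows the paper's argument step by step. You compute $\alpha_I(s)$ the same way, from the independence of any $k$ columns of an MDS generator matrix, split the sum at $s=k$ to get $nH_{n-k}$, rewrite $n(H_n-H_{n-k})$ as $k+\sum_{s=1}^{k-1}\frac{s}{n-s}$, and apply the same binomial identity, which you additionally check explicitly via factorials, before concluding that the maximum and average coincide because the value depends only on $m$.
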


\begin{proof}
For $s<k$, every set of $s$ columns is linearly independent.  Hence the $m$ vectors $e_i$, $i\in I$, are contained in the span of an $s$-subset of columns if and only if those $m$ systematic columns have been selected.  Therefore
\[
\alpha_I(s)=
\begin{cases}
0, & 1\le s<m,\\[2mm]
\binom{n-m}{s-m}, & m\le s<k,\\[2mm]
\binom ns, & k\le s\le n.
\end{cases}
\]
Substitution in Lemma~\ref{lem:general-expectation} gives
\[
\E[\tau_I(G)]
=nH_n-
\sum_{s=m}^{k-1}\frac{\binom{n-m}{s-m}}{\binom{n-1}{s}}
-
\sum_{s=k}^{n-1}\frac{\binom ns}{\binom{n-1}{s}}.
\]
The last sum is $nH_{n-k}$.  Moreover,
\[
n(H_n-H_{n-k})=k+\sum_{s=1}^{k-1}\frac{s}{n-s}.
\]
Finally,
\[
\frac{\binom{n-m}{s-m}}{\binom{n-1}{s}}
=\frac{s}{n-s}\frac{\binom{s-1}{m-1}}{\binom{n-1}{m-1}},
\]
where the binomial coefficient is interpreted as zero when $s<m$.  This yields \eqref{eq:mds-general}.  The expression depends only on $m$, and therefore the maximum and average over $\cI_m$ coincide.
\end{proof}

\begin{proposition}\label{prop:mds-monotone}
For systematic MDS encoders, the quantity $M_{n,k}(m)$ is strictly increasing as a function of $m$, for $1\leq m<k$.
\end{proposition}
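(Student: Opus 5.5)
We prove strict monotonicity directly from the closed form \eqref{eq:def-mds-value}, by writing the difference of consecutive values as a sum of nonnegative terms and exhibiting one strictly positive term. From the definition,
\[
M_{n,k}(m+1)-M_{n,k}(m)=\sum_{s=1}^{k-1}\frac{s}{n-s}\left(\frac{\binom{s-1}{m-1}}{\binom{n-1}{m-1}}-\frac{\binom{s-1}{m}}{\binom{n-1}{m}}\right).
\]
So it suffices to show that for every $1\le s\le n-1$ the bracket is nonnegative, and that it is strictly positive for at least one $s\le k-1$.

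Define the function $f_m(s):=\binom{s-1}{m-1}/\binom{n-1}{m-1}$. It has a probabilistic meaning: it is the probability that a uniformly random $(m-1)$-subset of $[n-1]$ lies inside $[s-1]$. For $s\ge m+1$ we compute the ratio
\[
\frac{f_{m+1}(s)}{f_m(s)}=\frac{\binom{s-1}{m}\binom{n-1}{m-1}}{\binom{s-1}{m-1}\binom{n-1}{m}}=\frac{s-m}{n-m}.
\]
Since $s\le n-1$, this ratio is less than $1$. For $s\le m$ we have $f_{m+1}(s)=0\le f_m(s)$. In all cases, therefore, $f_m(s)\ge f_{m+1}(s)$, and the inequality is strict whenever $f_m(s)>0$, that is, whenever $s\ge m$. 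A more conceptual route to the same conclusion: containment of a random $m$-subset in $[s-1]$ is less likely than containment of a random $(m-1)$-subset.

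For strictness, take $s=m$. This index satisfies $1\le m\le k-1$, using the hypothesis $m<k$. At $s=m$ the bracket equals $1/\binom{n-1}{m-1}>0$, and its weight $m/(n-m)$ is positive. Hence the difference is strictly positive, which proves the proposition.

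The only delicate point is the range bookkeeping. We need $s\le k-1\le n-1$, so that the ratio $(s-m)/(n-m)$ is strictly less than $1$. We also need the convention $\binom{a}{b}=0$ for $b>a$, so that the cases $s<m$ are handled correctly. Once these are checked, the argument reduces to the routine ratio computation above.
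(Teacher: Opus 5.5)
Your proof is correct and essentially matches the paper's: both rest on the ratio identity $\binom{s-1}{m}/\binom{n-1}{m}=\frac{s-m}{n-m}\,\binom{s-1}{m-1}/\binom{n-1}{m-1}$, and both use the term $s=m$ for strictness. The only difference is that the paper notes $1-\frac{s-m}{n-m}=\frac{n-s}{n-m}$ cancels the weight $\frac{s}{n-s}$, which gives the exact difference $\frac{1}{n-m}\sum_{s=1}^{k-1}s\binom{s-1}{m-1}/\binom{n-1}{m-1}$ rather than your termwise inequality.
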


\begin{proof}
Using
\[
\frac{\binom{s-1}{m}}{\binom{n-1}{m}}
=\frac{s-m}{n-m}\frac{\binom{s-1}{m-1}}{\binom{n-1}{m-1}},
\]
we obtain
\[
M_{n,k}(m+1)-M_{n,k}(m)
=\frac{1}{n-m}\sum_{s=1}^{k-1}s\frac{\binom{s-1}{m-1}}{\binom{n-1}{m-1}}>0.
\]
The strict positivity follows from the term $s=m$.
\end{proof}

\begin{corollary}\label{cor:mds-extremes}
Let $G$ be a systematic MDS generator matrix.  Then
\[
M_{n,k}(1)=k
\]
and
\[
M_{n,k}(k)=n(H_n-H_{n-k}).
\]
Consequently, for $1<m<k$,
\[
k<\E[\tau_I(G)]<n(H_n-H_{n-k}).
\]
Moreover, for fixed $k$ and fixed $m$,
\[
\lim_{n\to\infty}M_{n,k}(m)=k.
\]
\end{corollary}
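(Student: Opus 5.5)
The approach is to evaluate the closed form \eqref{eq:def-mds-value} directly at the two endpoints, then combine with Proposition~\ref{prop:mds-monotone} for the strict inequalities, and finally take the limit termwise.

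\textbf{The case $m=1$.} Here $\binom{s-1}{0}/\binom{n-1}{0}=1$ for every $s\ge 1$. Hence every summand in \eqref{eq:def-mds-value} vanishes, and $M_{n,k}(1)=k$.

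\textbf{The case $m=k$.} For $1\le s\le k-1$ we have $s-1<k-1$, so $\binom{s-1}{k-1}=0$ and each parenthesis equals $1$. Thus
\[
M_{n,k}(k)=k+\sum_{s=1}^{k-1}\frac{s}{n-s}.
\]
This equals $n(H_n-H_{n-k})$ by the identity already used in the proof of Theorem~\ref{thm:mds-general}. To justify that identity directly, write $n(H_n-H_{n-k})=\sum_{s=0}^{k-1}\frac{n}{n-s}$ and use $\frac{n}{n-s}=1+\frac{s}{n-s}$.

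\textbf{Strict inequalities for $1<m<k$.} Proposition~\ref{prop:mds-monotone} gives $M_{n,k}(1)<M_{n,k}(m)<M_{n,k}(k)$. The right-hand inequality uses the increment from $k-1$ to $k$, which is covered since the proposition concerns $m<k$ and passes from $m$ to $m+1$. By Theorem~\ref{thm:mds-general}, $M_{n,k}(m)=\E[\tau_I(G)]$, which yields the claimed bounds on the expectation.

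\textbf{The limit.} With $k$ and $m$ fixed, the sum in \eqref{eq:def-mds-value} has a fixed number $k-1$ of terms. Each term is bounded in absolute value by $\frac{s}{n-s}$, since the parenthesis lies in $[0,1]$: indeed $\binom{s-1}{m-1}\le\binom{n-1}{m-1}$. Each such bound tends to $0$ as $n\to\infty$, so $M_{n,k}(m)\to k$.

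No step is a real obstacle. The only point needing care is the convention $\binom{a}{b}=0$ for $b>a$ at $m=k$, and checking that the monotonicity statement reaches the endpoint $m=k$.
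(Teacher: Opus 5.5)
Your proposal is correct and follows the paper's proof essentially step for step: you evaluate \eqref{eq:def-mds-value} directly at $m=1$ and $m=k$, take the strict inequalities from Proposition~\ref{prop:mds-monotone}, and get the limit from the bound $0\le M_{n,k}(m)-k\le\sum_{s=1}^{k-1}\frac{s}{n-s}$. Your extra justifications, namely the harmonic identity via $\frac{n}{n-s}=1+\frac{s}{n-s}$ and the check that the parenthesis lies in $[0,1]$, fill in details the paper leaves implicit.
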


\begin{proof}
For $m=1$, the binomial quotient in \eqref{eq:def-mds-value} is equal to $1$ for every $s$, so all summands vanish and $M_{n,k}(1)=k$.  For $m=k$, one has $\binom{s-1}{k-1}=0$ for $1\leq s\leq k-1$, and hence
\[
M_{n,k}(k)=k+\sum_{s=1}^{k-1}\frac{s}{n-s}=n(H_n-H_{n-k}).
\]
The strict inequalities for $1<m<k$ follow from Proposition~\ref{prop:mds-monotone}.  The limit follows from
\[
0\leq M_{n,k}(m)-k\leq \sum_{s=1}^{k-1}\frac{s}{n-s},
\]
whose right-hand side tends to zero as $n\to\infty$.
\end{proof}

Thus, in the MDS case, the generalized random access problem gives a monotone scale between the two extremal values.  The single-coordinate value is $k$, while the full-recovery value is $n(H_n-H_{n-k})$; the intermediate expectations quantify how quickly the model moves from the local regime to the global one.  Notice also that for $m=k$ the MDS value coincides with the universal lower bound in Theorem~\ref{thm:universal-lower-bounds}, recovering the optimality of systematic MDS encoders for full recovery.

\begin{remark}
For $k=3$ the formula becomes especially simple.  If $|I|=2$, then
\begin{equation}\label{eq:mds-k3-m2}
\E[\tau_I(G)]=3+\frac{3}{n-1},
\end{equation}
and if $|I|=3$, then
\begin{equation}\label{eq:mds-k3-m3}
\E[\tau_I(G)]=3+\frac{1}{n-1}+\frac{2}{n-2}.
\end{equation}
\end{remark}

\section{Simplex encoders in arbitrary dimension}\label{sec:simplex-general}

Let $G_{\mathrm{Simp}}$ be a generator matrix of the $q$-ary simplex code of dimension $k$, obtained by choosing one representative for each point of the projective space $\PG(k-1,q)$.  Thus the length is
\[
n=\frac{q^k-1}{q-1}.
\]
Equivalently, the projective system of $G_{\mathrm{Simp}}$ is the whole space $\PG(k-1,q)$.

For $0\leq r\leq k$, we set
\[
N_r:=|\PG(r-1,q)|=\frac{q^r-1}{q-1},
\]
with the convention $N_0=0$.  We denote by
\[
\qbinom{a}{b}
\]
the Gaussian binomial coefficient, i.e. the number of $b$-dimensional subspaces of an $a$-dimensional vector space over $\F_q$.

Let $I\subseteq[k]$ with $|I|=m$, and let
\[
U_I:=\Span\{e_i:i\in I\}\rangle
\]
be the corresponding $m$-dimensional coordinate subspace of $\F_q^k$.  Since the projective system of the simplex code is the whole projective space, the expectation depends only on $m=|I|$, and not on the particular subset $I$.  We denote this common value by
\[
E_m^{\mathrm{Simp}}(q,k):=\E[\tau_I(G_{\mathrm{Simp}})],
\qquad |I|=m.
\]

We first compute the numbers $\alpha_I(s)$.  For $0\leq r\leq k$, let $\beta_q(r,s)$ be the number of $s$-subsets of $\PG(r-1,q)$ that span the whole $(r-1)$-dimensional projective space.  Equivalently, $\beta_q(r,s)$ is the number of $s$-subsets of the points of an $r$-dimensional vector space whose linear span has dimension $r$.

\begin{lemma}\label{lem:beta-simplex}
For $0\leq r\leq k$ and $0\leq s\leq N_r$, one has
\[
\beta_q(r,s)
=
\sum_{t=0}^{r}
(-1)^{r-t}
q^{\binom{r-t}{2}}
\qbinom{r}{t}
\binom{N_t}{s}.
\]
\end{lemma}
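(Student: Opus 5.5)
We will prove the formula by Möbius inversion on the lattice of subspaces of $V:=\F_q^r$. For every subspace $W\leq V$, let $f(W)$ be the number of $s$-subsets of projective points of $V$ whose span is exactly $W$. Let $g(W)$ be the number of $s$-subsets of points contained in $W$. Every $s$-subset of points of $W$ spans a unique subspace $W'\leq W$, so
\[
g(W)=\sum_{W'\leq W} f(W').
\]
If $\dim W=t$, then $W$ contains $N_t$ projective points, and hence $g(W)=\binom{N_t}{s}$. Note that $g(W)$ depends only on $\dim W$. By definition, $\beta_q(r,s)=f(V)$.

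Next we apply Möbius inversion in the subspace lattice. The Möbius function of this lattice is
\[
\mu(W',W)=(-1)^{d}q^{\binom{d}{2}},\qquad d=\dim W-\dim W'.
\]
Inverting the relation above gives
\[
f(V)=\sum_{W\leq V}\mu(W,V)\,g(W).
\]
We group the subspaces $W$ by their dimension $t$. There are $\qbinom{r}{t}$ such subspaces, and each contributes $(-1)^{r-t}q^{\binom{r-t}{2}}\binom{N_t}{s}$. This is exactly the claimed sum.

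The only non-routine ingredient is the Möbius function of the subspace lattice. If we want the proof to be self-contained, we will verify directly that $\mu(W',W)=(-1)^{d}q^{\binom{d}{2}}$ satisfies the defining relation
\[
\sum_{W'\leq U\leq W}\mu(U,W)=0\qquad\text{for } W'<W.
\]
Passing to the quotient $W/W'$, this relation becomes
\[
\sum_{j=0}^{d}(-1)^j q^{\binom{j}{2}}\qbinom{d}{j}=0\qquad\text{for } d\geq 1.
\]
This is the Gaussian (or $q$-)binomial theorem
\[
\prod_{i=0}^{d-1}(1+xq^i)=\sum_{j=0}^{d} q^{\binom{j}{2}}\qbinom{d}{j}x^j,
\]
evaluated at $x=-1$. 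The left-hand side then vanishes because of the factor $i=0$.

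Alternatively, we can avoid the general Möbius machinery by proving the inversion directly. We substitute $g(W)=\sum_{W'\leq W}f(W')$ into the right-hand side and swap the sums. For each fixed $W'$, the inner sum is
\[
\sum_{W'\leq W\leq V}\mu(W,V),
\]
and by the same $q$-binomial identity this equals $1$ if $W'=V$ and $0$ otherwise. We expect this vanishing identity to be the main (though standard) step.

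Finally, we check the degenerate cases. The formula must hold for $s=0$ and for $r=0$, using the convention $N_0=0$ so that $\binom{N_0}{s}=[s=0]$. For $r=0$, the sum reduces to $\binom{0}{s}$, which is correct because the empty set spans the zero space. For $r\geq1$ and $s=0$, the sum $\sum_t(-1)^{r-t}q^{\binom{r-t}{2}}\qbinom{r}{t}$ vanishes by the same identity, which is also correct.
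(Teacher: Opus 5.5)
Your proof is correct and follows the paper's own argument: Möbius inversion over the lattice of subspaces of an $r$-dimensional space, with $\mu(T,W)=(-1)^{r-\dim T}q^{\binom{r-\dim T}{2}}$, then grouping the subspaces by dimension using $\qbinom{r}{t}$. Your self-contained check of the Möbius function via the $q$-binomial theorem at $x=-1$ is sound; the paper simply takes this value as known, and the degenerate cases $r=0$ and $s=0$ are already covered by the general inversion.
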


\begin{proof}
Fix an $r$-dimensional vector space $W$.  For each subspace $T\leq W$, the number of $s$-subsets of projective points contained in $T$ is $\binom{N_{\dim T}}{s}$.  By M\"obius inversion in the lattice of subspaces of $W$, the number of $s$-subsets whose span is exactly $W$ is
\[
\sum_{T\leq W}
\mu(T,W)\binom{N_{\dim T}}{s},
\]
where
\[
\mu(T,W)=
(-1)^{r-\dim T}q^{\binom{r-\dim T}{2}}.
\]
Since the number of $t$-dimensional subspaces of $W$ is $\qbinom{r}{t}$, the claimed formula follows.
\end{proof}

\begin{proposition}\label{prop:alpha-simplex-general}
Let $G_{\mathrm{Simp}}$ be the $q$-ary simplex encoder of dimension $k$, and let $I\subseteq[k]$ with $|I|=m$.  Then, for every $0\leq s\leq n$,
\[
\alpha_I(s)
=
\sum_{r=m}^{k}
\qbinom{k-m}{r-m}
\beta_q(r,s),
\]
where $\beta_q(r,s)$ is given in Lemma~\ref{lem:beta-simplex}.
\end{proposition}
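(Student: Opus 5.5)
The plan is to partition the $I$-recovering $s$-subsets of columns according to the linear subspace they span. Since the projective system of $G_{\mathrm{Simp}}$ is all of $\PG(k-1,q)$, each column appears exactly once, so $s$-subsets of $[n]$ correspond bijectively to $s$-subsets of points of $\PG(k-1,q)$. A set $S$ of points is $I$-recovering exactly when the span $W=\langle S\rangle$ contains $U_I$. Consequently
\[
\alpha_I(s)=\sum_{\substack{W\leq \F_q^k\\ U_I\leq W}} \bigl|\{S:\ |S|=s,\ \langle S\rangle = W\}\bigr|,
\]
and this sum is disjoint because every $S$ has a unique span.

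The summand depends only on $r=\dim W$. The $s$-subsets of points of $\PG(k-1,q)$ whose span equals $W$ are exactly the $s$-subsets of the points of $W$ (viewed as $\PG(r-1,q)$) that span $W$. Their number is $\beta_q(r,s)$ by definition, and Lemma~\ref{lem:beta-simplex} gives its explicit value. The remaining task is to count the $r$-dimensional subspaces $W$ with $U_I\leq W\leq \F_q^k$. The map $W\mapsto W/U_I$ is a bijection onto the $(r-m)$-dimensional subspaces of the $(k-m)$-dimensional quotient $\F_q^k/U_I$, so there are $\qbinom{k-m}{r-m}$ of them, and this is nonzero only for $m\leq r\leq k$. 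Summing over $r$ gives the stated formula.

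A few boundary checks complete the argument. For $s<m$, or more generally $s<r$, we have $\beta_q(r,s)=0$ because $s$ points cannot span an $r$-space, so $\alpha_I(s)=0$ as required. For $s>N_r$ we have $\beta_q(r,s)=0$ trivially; this convention is needed because Lemma~\ref{lem:beta-simplex} is stated only for $s\leq N_r$. The formula should also be consistent at the endpoints: for $s=n$, only $r=k$ contributes, and $\beta_q(k,n)=1$.

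I expect no serious obstacle. The only points needing care are the bijection between column subsets and point subsets, which relies on the simplex code having no repeated points, and extending $\beta_q(r,s)$ by zero outside the range covered by the lemma.
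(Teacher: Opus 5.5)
Your proposal is correct and follows essentially the same argument as the paper: partition the $I$-recovering $s$-subsets by their span $W\supseteq U_I$, count the $r$-dimensional such $W$ by $\qbinom{k-m}{r-m}$, and multiply by $\beta_q(r,s)$. Your additions are welcome details the paper leaves implicit: the quotient-space justification, the remark that simplex columns are pairwise non-proportional, and the boundary checks. For the last of these, note that extending $\beta_q(r,s)$ by zero for $s>N_r$ is automatic, since every $\binom{N_t}{s}$ in the lemma's formula then vanishes.
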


\begin{proof}
Let $S$ be an $s$-subset of the projective system of the simplex code, and let $W=\Span\{P_j:j\in S\}\rangle$.  The set $S$ recovers all symbols indexed by $I$ if and only if
\[
U_I\leq W.
\]
Suppose that $\dim W=r$.  The number of $r$-dimensional subspaces $W\leq \F_q^k$ containing $U_I$ is
\[
\qbinom{k-m}{r-m}.
\]
For each such $W$, the number of $s$-subsets of the projective points of $W$ spanning $W$ is $\beta_q(r,s)$.  Summing over all possible dimensions $r=m,\ldots,k$, we obtain the formula.
\end{proof}

Combining Proposition~\ref{prop:alpha-simplex-general} with the general expectation formula gives the following closed expression.

\begin{theorem}\label{thm:simplex-general-expectation}
Let $G_{\mathrm{Simp}}$ be the $q$-ary simplex encoder of dimension $k$, and let $I\subseteq[k]$ with $|I|=m$.  Then
\[
E_m^{\mathrm{Simp}}(q,k)
=
nH_n
-
\sum_{s=1}^{n-1}
\frac{1}{\binom{n-1}{s}}
\sum_{r=m}^{k}
\qbinom{k-m}{r-m}
\sum_{t=0}^{r}
(-1)^{r-t}
q^{\binom{r-t}{2}}
\qbinom{r}{t}
\binom{N_t}{s}.
\]
In particular,
\[
\Tmax^{(m)}(G_{\mathrm{Simp}})
=
\Tavg^{(m)}(G_{\mathrm{Simp}})
=
E_m^{\mathrm{Simp}}(q,k).
\]
\end{theorem}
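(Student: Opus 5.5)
The formula follows by direct substitution, so the plan is essentially bookkeeping. First, I will invoke Lemma~\ref{lem:general-expectation}, which gives
\[
\E[\tau_I(G_{\mathrm{Simp}})]=nH_n-\sum_{s=1}^{n-1}\frac{\alpha_I(s)}{\binom{n-1}{s}},
\qquad n=N_k .
\]
Since $G_{\mathrm{Simp}}$ has rank $k$ and nonzero, pairwise non-proportional columns, the lemma applies. Next, I will replace $\alpha_I(s)$ by the expression of Proposition~\ref{prop:alpha-simplex-general}. Then I will expand each $\beta_q(r,s)$ using Lemma~\ref{lem:beta-simplex}. This is valid because $0\le s\le n-1$ and, whenever $s>N_r$, both sides of Lemma~\ref{lem:beta-simplex} vanish: no $s$-subset of $\PG(r-1,q)$ exists, and every $\binom{N_t}{s}$ with $t\le r$ is zero. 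Interchanging the finite sums over $r$ and $t$ then produces exactly the triple sum in the statement.

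The only step requiring care is this range issue in Lemma~\ref{lem:beta-simplex}, which was stated only for $s\le N_r$. I would handle it explicitly. When $s>N_r$, the count $\beta_q(r,s)$ is $0$ by definition. The right-hand side is also $0$, since $N_t\le N_r<s$ for all $t\le r$. So the identity extends to all $s\ge0$, and the substitution is legitimate for every $s\in\{1,\dots,n-1\}$.

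For the second assertion, I will observe that the projective system of $G_{\mathrm{Simp}}$ is all of $\PG(k-1,q)$. Any permutation matrix $A\in\mathrm{GL}_k(q)$ maps this point set to itself, so together with the induced column permutation $\sigma$ it lies in $\Aut_{\mathrm{info}}(G_{\mathrm{Simp}})$. Permutation matrices permute the fundamental points $E_i$ by an arbitrary $\pi\in S_k$. Hence $\Pi(G_{\mathrm{Simp}})=S_k$, which is $m$-homogeneous for every $m$. Corollary~\ref{cor:m-homogeneous} then gives
\[
\Tmax^{(m)}(G_{\mathrm{Simp}})=\Tavg^{(m)}(G_{\mathrm{Simp}})=E_m^{\mathrm{Simp}}(q,k).
\]
Alternatively, this already follows from the fact that the formula depends on $I$ only through $m$.
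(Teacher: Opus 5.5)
Your proposal is correct and follows the paper's proof: you substitute Proposition~\ref{prop:alpha-simplex-general}, with $\beta_q(r,s)$ expanded via Lemma~\ref{lem:beta-simplex}, into Lemma~\ref{lem:general-expectation}, and you obtain $\Tmax^{(m)}=\Tavg^{(m)}$ from the transitive action of coordinate permutation matrices on the $m$-sets of fundamental points. Your explicit check that Lemma~\ref{lem:beta-simplex} extends to $s>N_r$, where both sides vanish, is a worthwhile detail the paper leaves implicit. Your remark that the formula depends on $I$ only through $m$ already suffices on its own for the second assertion.
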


\begin{proof}
The formula follows by substituting the expression for $\alpha_I(s)$ from Proposition~\ref{prop:alpha-simplex-general} into Lemma~\ref{lem:general-expectation}.  The equality between maximum and average follows from symmetry.  Indeed, the monomial projective transformations induced by permutations of the coordinate axes preserve the simplex projective system and act transitively on the subsets of fundamental points of any fixed cardinality.
\end{proof}

\begin{remark}\label{rem:simplex-extremes}
The formula in Theorem~\ref{thm:simplex-general-expectation} interpolates between the singleton random-access problem and the full-recovery coverage-depth problem.

For $m=1$, the simplex code is recovery balanced, and one obtains
\[
E_1^{\mathrm{Simp}}(q,k)=k.
\]
Thus the simplex encoder behaves, for singleton recovery, like the systematic MDS encoder and the identity encoder.

For $m=k$, the condition is that the observed columns span the whole space $\F_q^k$.  In this case
\[
E_k^{\mathrm{Simp}}(q,k)
=
k+
\sum_{i=1}^{k}
\frac{q^{i-1}-1}{q^k-q^{i-1}}.
\]
This is the full-recovery coverage-depth value of the $q$-ary simplex code, as computed in \cite{bertuzzo2026duality}.  Equivalently,
\[
E_k^{\mathrm{Simp}}(q,k)
=
\sum_{i=0}^{k-1}
\frac{N_k}{N_k-N_i}.
\]
Indeed, when the currently observed span has dimension $i$, the probability that the next sampled point increases the dimension is
\[
\frac{N_k-N_i}{N_k}.
\]
\end{remark}

For actual computations, the following recursive form is often more convenient than the closed formula above.  It extends the dimension-three incidence argument by keeping track not only of the dimension of the observed span, but also of its intersection with the requested coordinate subspace.

\begin{proposition}\label{prop:simplex-recursion}
Let $U_I=\Span\{e_i:i\in I\}\rangle$, with $\dim U_I=m$.  For integers $a,b$ with $0\leq a\leq m$ and $0\leq b\leq k$, let $R_{a,b}$ denote the expected number of further samples needed to obtain a span containing $U_I$, starting from a subspace $L\leq \F_q^k$ such that
\[
\dim L=b,
\qquad
\dim(L\cap U_I)=a.
\]
Then
\[
R_{m,b}=0
\]
for every $b\geq m$.  For $a<m$ and for every admissible state with $b<k$, one has
\[
R_{a,b}
=
\frac{N_k}{N_k-N_b}
+
\frac{N_{b+m-a}-N_b}{N_k-N_b}R_{a+1,b+1}
+
\frac{N_k-N_{b+m-a}}{N_k-N_b}R_{a,b+1}.
\]
The desired expectation is
\[
E_m^{\mathrm{Simp}}(q,k)=R_{0,0}.
\]
In the recursion, terms with zero coefficient are omitted; equivalently, no
state with $b=k$ and $a<m$ is ever evaluated.
\end{proposition}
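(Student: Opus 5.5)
The plan is a first-step (Markov chain) analysis of the sampling process, with states recording the observed span $L$. First, observe that the distribution of future samples depends only on the current span $L$. The stopping time depends only on whether $U_I\le L$. By the symmetry of the simplex system, the pair $(\dim L,\dim(L\cap U_I))$ is a sufficient statistic. Indeed, the stabilizer of $U_I$ in $\mathrm{GL}_k(q)$ preserves the simplex projective system, which is the whole space. It also acts transitively on subspaces $L$ with given $\dim L=b$ and $\dim(L\cap U_I)=a$, a standard fact one can check by extending bases of $L\cap U_I$, $U_I$, $L$ and $L+U_I$ compatibly. Hence $R_{a,b}$ is well defined. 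The boundary condition $R_{m,b}=0$ is immediate, since $a=m$ means $U_I\le L$.

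For $a<m$ and $b<k$, condition on the next samples. A sample lands in $L$ with probability $N_b/N_k$ and does not change the state. So the number of samples until the first point outside $L$ is geometric with mean $N_k/(N_k-N_b)$. That first point is uniform among the $N_k-N_b$ points outside $L$, and the new span $L'=L+\langle P\rangle$ has dimension $b+1$. Next I compute $\dim(L'\cap U_I)$. Since $\dim(L'\cap U_I)\le a+1$, it equals $a+1$ exactly when $P\in L+U_I$. Otherwise it stays $a$: if $L'\cap U_I$ grew, then $L'=L+\langle u\rangle$ for some $u\in U_I\setminus L$, forcing $P\in L+U_I$.

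Now $\dim(L+U_I)=b+m-a$, so the number of points in $(L+U_I)\setminus L$ is $N_{b+m-a}-N_b$. This gives the two transition probabilities in the statement. By the strong Markov property and linearity of expectation, the recursion follows. When $b+m-a>k$ this cannot occur, since $L+U_I\le\F_q^k$. When $b+m-a=k$, the second coefficient is $0$. This is exactly why states with $b=k$, $a<m$ never need evaluation: such a state would mean $L=\F_q^k\not\supseteq U_I$, which is impossible. Finally, the process starts from the empty span, so $a=b=0$, and $E_m^{\mathrm{Simp}}(q,k)=R_{0,0}$.

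The main obstacle is justifying that the expected remaining time depends only on $(a,b)$, i.e.\ the transitivity/symmetry step. One can either invoke the group action as above, or avoid it by noting that the recursion derivation only uses counts ($N_b$ and $N_{b+m-a}-N_b$) that depend on $(a,b)$. Then induct downward on $b$ to show that the expected remaining time from any $L$ with given $(a,b)$ satisfies the same recursion, hence equals $R_{a,b}$. I would use this induction, since it needs no group theory. Finiteness of the expectations is clear, since every state with $a<m$ has $b<k$ and a positive probability of increasing $b$.
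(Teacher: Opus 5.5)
Your proof is correct and follows the same first-step analysis as the paper: a geometric waiting time with mean $N_k/(N_k-N_b)$ until a point outside $L$ is drawn, followed by splitting the $N_k-N_b$ outside points according to whether they lie in $L+U_I$, which has dimension $b+m-a$. You also justify two points the paper leaves implicit: that $R_{a,b}$ is well defined (via the stabilizer of $U_I$ or via downward induction on $b$), and that $\dim(L'\cap U_I)$ increases exactly when $P\in L+U_I$.
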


\begin{proof}
Assume that the current observed span is $L$, with
\[
\dim L=b,
\qquad
\dim(L\cap U_I)=a<m.
\]
There are $N_b$ projective points already contained in $L$; drawing one of them does not change the state.  Thus, after conditioning on drawing a point outside $L$, the expected waiting contribution is
\[
\frac{N_k}{N_k-N_b}.
\]

Now consider a point $P\notin L$.  The new span is $L+\langle P\rangle$, which has dimension $b+1$.  The dimension of the intersection with $U_I$ increases from $a$ to $a+1$ precisely when
\[
P\in L+U_I
\]
but $P\notin L$.  Since
\[
\dim(L+U_I)=b+m-a,
\]
the number of such projective points is
\[
N_{b+m-a}-N_b.
\]
The remaining points outside $L+U_I$, namely
\[
N_k-N_{b+m-a},
\]
increase the dimension of $L$ but do not increase the intersection with $U_I$.  This gives the stated recursion.  Starting from the zero subspace corresponds to the state $(a,b)=(0,0)$, hence the desired expectation is $R_{0,0}$.
\end{proof}

\begin{example}\label{ex:simplex-small-values}
For the binary simplex encoder of dimension $3$, one obtains
\[
E_1^{\mathrm{Simp}}(2,3)=3,
\qquad
E_2^{\mathrm{Simp}}(2,3)=\frac{11}{3},
\qquad
E_3^{\mathrm{Simp}}(2,3)=\frac{47}{12}.
\]
For the ternary simplex encoder of dimension $3$, one obtains
\[
E_1^{\mathrm{Simp}}(3,3)=3,
\qquad
E_2^{\mathrm{Simp}}(3,3)=\frac{41}{12},
\qquad
E_3^{\mathrm{Simp}}(3,3)=\frac{127}{36}.
\]
These values agree with the dimension-three computation obtained by counting lines in $\PG(2,q)$, while the formulae above work uniformly for all dimensions $k$.
\end{example}

\section{Balanced quasi-arcs in dimension three}\label{sec:dimension-three}

We now specialize to $k=3$ and use the language of the projective plane $\PG(2,q)$.  This is the first case in which the generalized problem has a genuinely intermediate value of $|I|$, namely $|I|=2$.  It is therefore the simplest setting where one can see how the model connects the single-symbol and full-recovery extremal cases.  The requested information vectors are represented by three non-collinear fundamental points
\[
E_1=(1:0:0),\qquad E_2=(0:1:0),\qquad E_3=(0:0:1).
\]

Balanced quasi-arcs were introduced in \cite{gruica2024geometry} to create projective systems in which the fundamental points are intentionally easier to recover.

\begin{definition}
Let $\cF=\{E_1,E_2,E_3\}\subseteq \PG(2,q)$ be three non-collinear points.  A set $\cG\subseteq \PG(2,q)$ is a balanced quasi-arc of weight $x$ if
\[
\cG=\cF\cup \cG_1\cup \cG_2\cup \cG_3,
\]
where $|\cG_1|=|\cG_2|=|\cG_3|=x$ and, up to relabelling,
\[
\cG_1\subseteq E_1E_2,\qquad
\cG_2\subseteq E_2E_3,\qquad
\cG_3\subseteq E_3E_1.
\]
Moreover, every line distinct from the three fundamental lines meets $\cG$ in at most two points.
\end{definition}

Thus $|\cG|=3x+3$.  The three fundamental lines are $(x+2)$-secants, whereas all other lines are at most $2$-secants.  Any balanced quasi-arc gives a $3\times (3x+3)$ generator matrix by taking homogeneous representatives of the points of $\cG$ as columns.

\begin{proposition}\label{prop:quasi-alpha-single}
Let $\cG_x$ be a balanced quasi-arc of weight $x$ and let $I=\{i\}$.  Then the value of $\alpha_I(s)$ is independent of the chosen fundamental point and is given by
\[
\alpha_I(s)=
\begin{cases}
1, & s=1,\\[1mm]
2\binom{x+2}{2}+x, & s=2,\\[1mm]
\binom{3x+3}{s}-\binom{x+2}{s}, & 3\le s\le x+2,\\[1mm]
\binom{3x+3}{s}, & x+2<s\le 3x+2.
\end{cases}
\]
Consequently,
\begin{equation}\label{eq:quasi-m1}
\E[\tau_I(\cG_x)]
=3+\frac{2}{3x+1}
-\frac{2(x+2)(x+1)+2x}{(3x+2)(3x+1)}
+\sum_{s=3}^{x+2}\prod_{j=0}^{s-1}\frac{x+2-j}{3x+2-j}.
\end{equation}
\end{proposition}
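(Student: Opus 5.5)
The plan is to compute $\alpha_I(s)$ directly from the incidence structure of $\cG_x$, and then substitute it into Lemma~\ref{lem:general-expectation}.

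\emph{Independence of $i$.} The defining properties of a balanced quasi-arc are symmetric in the three fundamental points. For each $E_i$, there are two fundamental lines through $E_i$, each carrying $x+2$ points of $\cG_x$, and one fundamental line opposite to $E_i$, also carrying $x+2$ points, exactly two of which ($E_j$ and $E_k$) lie on the other fundamental lines. The counts below use only these data, so they will not depend on $i$. I will not try to exhibit an automorphism via Proposition~\ref{prop:automorphism-orbits}.

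\emph{Counting $\alpha_I(s)$, with $n=3x+3$.}
\begin{itemize}[label=--]
\item For $s=1$, only $\{E_i\}$ spans a space containing $E_i$, so $\alpha_I(1)=1$.
\item For $s=2$, a pair recovers $E_i$ iff it contains $E_i$ or the line it spans passes through $E_i$. Every non-fundamental line through $E_i$ meets $\cG_x$ in at most one further point, so any pair on such a line already contains $E_i$. Hence the recovering pairs are the pairs contained in one of the two fundamental lines through $E_i$, giving $2\binom{x+2}{2}$ pairs (the two lines share only $E_i$, so there is no double counting). To these we add the pairs $\{E_i,P\}$ with $P$ on neither of these two lines. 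Such $P$ are exactly the $x$ points of $\cG_x$ on the opposite line other than $E_j,E_k$. This yields $2\binom{x+2}{2}+x$.
\item For $s\ge 3$, an $s$-subset either spans the plane, and then recovers $E_i$, or is collinear. The only lines with at least $3$ points of $\cG_x$ are the fundamental lines, so a collinear $s$-subset lies on one of them. It fails to recover $E_i$ exactly when it lies on the opposite line. Thus the non-recovering $s$-subsets number $\binom{x+2}{s}$, which vanishes for $s>x+2$.
\end{itemize}

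\emph{Expectation.} I will rewrite Lemma~\ref{lem:general-expectation} as
\[
\E[\tau_I]=\sum_{s=0}^{n-1}\frac{\binom ns-\alpha_I(s)}{\binom{n-1}{s}},
\]
using $nH_n=\sum_s \binom ns/\binom{n-1}s$ as in its proof. The individual terms are as follows.
\begin{itemize}[label=--]
\item The $s=0$ term is $1$.
\item The $s=1$ term is $(n-1)/(n-1)=1$.
\item The $s=2$ term is
\[
\frac{n}{n-2}-\frac{2(x+2)(x+1)+2x}{(3x+2)(3x+1)},
\qquad\text{with}\qquad
\frac{n}{n-2}=1+\frac{2}{3x+1}.
\]
\item For $3\le s\le x+2$, the term is $\binom{x+2}{s}/\binom{3x+2}{s}=\prod_{j=0}^{s-1}\frac{x+2-j}{3x+2-j}$.
\end{itemize}
Summing these terms gives \eqref{eq:quasi-m1}.

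The main obstacle is the $s=2$ count. One must carefully avoid double counting pairs through $E_i$, and one must use the at-most-$2$-secant condition to rule out recovery via a non-fundamental line through $E_i$.
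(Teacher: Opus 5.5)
Your proposal is correct and follows essentially the same route as the paper. You count $\alpha_I(s)$ case by case from the incidence structure: only $E_i$ for $s=1$; for $s=2$, the pairs on the two fundamental lines through $E_i$ plus the $x$ pairs $\{E_i,P\}$ with $P$ on the opposite line; and for $s\geq 3$, everything except the $\binom{x+2}{s}$ subsets of the opposite line. You then substitute into Lemma~\ref{lem:general-expectation}. The term-by-term evaluation of the expectation and the use of the at-most-$2$-secant condition to exclude other recovering pairs spell out details the paper leaves implicit.
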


\begin{proof}
For $s=1$, only the fundamental point itself works.  For $s=2$, one either selects two points on one of the two fundamental lines through the chosen point, or one selects the chosen fundamental point together with a point on the opposite fundamental line; this gives $2\binom{x+2}{2}+x$.  For $s\ge 3$, the only non-recovering subsets are those contained in the fundamental line opposite to the chosen point.  Substituting these values in Lemma~\ref{lem:general-expectation} and using $n=3x+3$ gives \eqref{eq:quasi-m1}.
\end{proof}

\begin{proposition}\label{prop:quasi-m2}
Let $\cG_x$ be a balanced quasi-arc of weight $x$ and let $I\subseteq\{1,2,3\}$ with $|I|=2$.  Then
\[
\alpha_I(s)=
\begin{cases}
0, & s=1,\\[1mm]
\binom{x+2}{2}, & s=2,\\[1mm]
\binom{3x+3}{s}-2\binom{x+2}{s}, & 3\le s\le x+2,\\[1mm]
\binom{3x+3}{s}, & x+2<s\le 3x+2.
\end{cases}
\]
Moreover,
\begin{equation}\label{eq:quasi-m2}
\E[\tau_I(\cG_x)]
=3+\frac{2}{3x+1}+\frac{1}{3x+2}
-\frac{(x+2)(x+1)}{(3x+2)(3x+1)}
+2\sum_{s=3}^{x+2}\prod_{j=0}^{s-1}\frac{x+2-j}{3x+2-j}.
\end{equation}
\end{proposition}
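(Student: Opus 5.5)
The proof follows the singleton case of Proposition~\ref{prop:quasi-alpha-single}: first count the recovering sets $\alpha_I(s)$ geometrically, then substitute into Lemma~\ref{lem:general-expectation} with $n=3x+3$. By the relabelling symmetry of the definition, take $I=\{1,2\}$. The requested subspace is then the fundamental line $\ell:=E_1E_2$. A subset $S\subseteq\cG_x$ is $I$-recovering exactly when its span contains $\ell$, that is, when $\langle S\rangle=\ell$ or $\langle S\rangle=\PG(2,q)$.

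The counts go as follows. For $s=1$ a point cannot span a line, so $\alpha_I(1)=0$, consistent with Proposition~\ref{prop:alpha-properties}(ii). For $s=2$, two distinct points span the line $\ell$ exactly when both lie on $\ell$, and $\ell$ contains $x+2$ points of $\cG_x$, namely $E_1,E_2$ and the points of $\cG_1$. This gives $\binom{x+2}{2}$. For $s\ge 3$, a set fails to recover $I$ exactly when it is contained in a line other than $\ell$. Any line other than the three fundamental lines meets $\cG_x$ in at most two points, so such a set must lie in $E_2E_3$ or $E_3E_1$. These two families are disjoint for $s\ge 2$, since the lines meet only in $E_3$. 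Hence $\alpha_I(s)=\binom{3x+3}{s}-2\binom{x+2}{s}$, and the correction vanishes for $s>x+2$. The only point needing care is checking that no point of $\cG_1,\cG_2,\cG_3$ is a fundamental point, which holds because the union is $3x+3$ points with $|\cG|=3x+3$. This step also confirms that every fundamental line carries exactly $x+2$ points.

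For the expectation, write the formula of Lemma~\ref{lem:general-expectation} as
\[
nH_n-\sum_{s=1}^{n-1}\frac{\binom ns}{\binom{n-1}{s}}+\sum_{s=1}^{n-1}\frac{\binom ns-\alpha_I(s)}{\binom{n-1}{s}}.
\]
The first two terms combine to $n/n=1$, using $\binom ns/\binom{n-1}s=n/(n-s)$. The remaining deficit terms are:
\begin{itemize}
\item at $s=1$, $\frac{n}{n-1}$;
\item at $s=2$, $\frac{\binom n2-\binom{x+2}2}{\binom{n-1}2}=\frac{n}{n-2}-\frac{(x+2)(x+1)}{(3x+2)(3x+1)}$;
\item for $3\le s\le x+2$, $2\binom{x+2}{s}/\binom{3x+2}{s}$, which equals $2\prod_{j=0}^{s-1}\frac{x+2-j}{3x+2-j}$.
\end{itemize}

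Finally, simplify the constants: $1+\frac{3x+3}{3x+2}+\frac{3x+3}{3x+1}=3+\frac{1}{3x+2}+\frac{2}{3x+1}$. This yields \eqref{eq:quasi-m2}.

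The main obstacle is the $s\ge3$ counting, specifically making sure that the non-recovering sets are exactly the sets contained in one of the two non-requested fundamental lines. Once that is settled, the rest is bookkeeping in the substitution, which mirrors the computation behind \eqref{eq:quasi-m1}.
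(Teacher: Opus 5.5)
Your proposal is correct and follows the same route as the paper: take $I=\{1,2\}$, count $\alpha_I(s)$ using the line $E_1E_2$ (which carries $x+2$ points) and the two other fundamental lines, then substitute into Lemma~\ref{lem:general-expectation} with $n=3x+3$. You add the details the paper leaves implicit: the $2$-secant condition forces every non-recovering set with $s\ge 3$ into $E_2E_3$ or $E_3E_1$, these two families are disjoint, and the constant bookkeeping $1+\frac{n}{n-1}+\frac{n}{n-2}=3+\frac{1}{3x+2}+\frac{2}{3x+1}$ comes out as stated.
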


\begin{proof}
Assume, without loss of generality, that $I=\{1,2\}$.  Two points recover $E_1$ and $E_2$ exactly when they lie on the line $E_1E_2$, which contains $x+2$ points of $\cG_x$.  For $3\le s\le x+2$, the non-recovering $s$-subsets are precisely those contained in one of the two other fundamental lines.  For $s>x+2$, no $s$-subset can be contained in a fundamental line, so every subset works.  The expectation follows from Lemma~\ref{lem:general-expectation}.
\end{proof}

\begin{proposition}\label{prop:quasi-m3}
Let $\cG_x$ be a balanced quasi-arc of weight $x$ and let $I=\{1,2,3\}$.  Then
\[
\alpha_I(s)=
\begin{cases}
0, & s=1,2,\\[1mm]
\binom{3x+3}{s}-3\binom{x+2}{s}, & 3\le s\le x+2,\\[1mm]
\binom{3x+3}{s}, & x+2<s\le 3x+2.
\end{cases}
\]
Consequently,
\begin{equation}\label{eq:quasi-m3}
\E[\tau_I(\cG_x)]
=3+\frac{2}{3x+1}+\frac{1}{3x+2}
+3\sum_{s=3}^{x+2}\prod_{j=0}^{s-1}\frac{x+2-j}{3x+2-j}.
\end{equation}
\end{proposition}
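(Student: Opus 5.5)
The proof will follow the pattern of Propositions~\ref{prop:quasi-alpha-single} and~\ref{prop:quasi-m2}. Requiring all three fundamental points means requiring that the sampled points span the whole plane $\PG(2,q)$. So the first step is to identify, for each $s$, which $s$-subsets of $\cG_x$ are \emph{not} spanning. A subset fails to span exactly when it is contained in a line; for $s\le 2$ this always happens, so $\alpha_I(1)=\alpha_I(2)=0$.

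For $s\ge 3$, a non-spanning $s$-subset lies on a line meeting $\cG_x$ in at least $3$ points. By the definition of a balanced quasi-arc, such a line must be one of the three fundamental lines, each containing exactly $x+2$ points of $\cG_x$.

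\textbf{Step 1 (counting the non-spanning sets).} Two distinct fundamental lines share exactly one point, namely a fundamental point. Hence no subset of size $\geq 2$ lies on two of them, and the families of $s$-subsets of the three lines are pairwise disjoint for $s\ge 3$. It follows that:
\begin{itemize}
\item for $3\le s\le x+2$, the number of non-spanning $s$-subsets is $3\binom{x+2}{s}$, which gives $\alpha_I(s)=\binom{3x+3}{s}-3\binom{x+2}{s}$;
\item for $s>x+2$, no $s$-subset fits on a line, so $\alpha_I(s)=\binom{3x+3}{s}$.
\end{itemize}
This establishes the displayed case distinction.

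\textbf{Step 2 (substitution).} Insert these values into Lemma~\ref{lem:general-expectation} with $n=3x+3$, and split
\[
\alpha_I(s)=\binom ns-\delta_s,
\]
where $\delta_1=n$, $\delta_2=\binom n2$, $\delta_s=3\binom{x+2}{s}$ for $3\le s\le x+2$, and $\delta_s=0$ beyond. Since $\sum_{s=0}^{n-1}\binom ns/\binom{n-1}{s}=nH_n$, the $nH_n$ term cancels against the sum over $s$ from $0$ to $n-1$, leaving the $s=0$ term $-1$. Thus
\[
\E[\tau_I]=-1+\sum_s \delta_s/\binom{n-1}{s}.
\]
The individual contributions are:
\begin{itemize}
\item $s=1$ contributes $n/(n-1)$;
\item $s=2$ contributes $\binom n2/\binom{n-1}2=n/(n-2)$;
\item each $s\ge3$ contributes $3\binom{x+2}{s}/\binom{3x+2}{s}=3\prod_{j=0}^{s-1}\frac{x+2-j}{3x+2-j}$.
\end{itemize}

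\textbf{Step 3 (simplification).} Finally, compute
\[
-1+\frac{3x+3}{3x+2}+\frac{3x+3}{3x+1}=1+\frac{1}{3x+2}+1+\frac{2}{3x+1}=2+\frac{2}{3x+1}+\frac{1}{3x+2}.
\]
This appears to give $2$ rather than $3$ as the constant, so I will recheck it carefully. The likely source is the $s=0$ bookkeeping: Lemma~\ref{lem:general-expectation} sums only over $s\ge1$, so the cancellation must be redone. Writing $nH_n=\sum_{s=0}^{n-1} n/(n-s)$, its $s=0$ term equals $1$ and is not cancelled. The constant becomes $1+1+\frac{1}{3x+2}+1+\frac{2}{3x+1}$, which matches \eqref{eq:quasi-m3}.

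The main obstacle is only this constant-term bookkeeping together with the disjointness claim in Step~1. Everything else is routine, and as a sanity check I will compare the result against Propositions~\ref{prop:quasi-alpha-single} and~\ref{prop:quasi-m2}.
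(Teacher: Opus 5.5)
Your proof is correct and follows the paper's argument. For $s\ge 3$, the non-spanning $s$-subsets are exactly those lying on one of the three fundamental $(x+2)$-secants; these lines pairwise meet in a single point, so the three families are disjoint. The expectation then follows from Lemma~\ref{lem:general-expectation}.

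In a final write-up, drop the erroneous first pass of Step~2. The uncancelled term is $nH_n-\sum_{s=1}^{n-1}\frac{n}{n-s}=+1$, not $-1$. Hence $\E[\tau_I]=1+\frac{n}{n-1}+\frac{n}{n-2}+3\sum_{s=3}^{x+2}\frac{\binom{x+2}{s}}{\binom{3x+2}{s}}$, which is exactly the stated formula.
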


\begin{proof}
For $s=1,2$ it is impossible to generate the whole plane.  For $3\le s\le x+2$, the only bad subsets are those contained in one of the three fundamental lines.  For larger $s$ this cannot happen, and every subset contains three non-collinear points.  The expectation is again obtained from Lemma~\ref{lem:general-expectation}.
\end{proof}

The three formulae have different limiting behavior.

\begin{corollary}\label{cor:quasi-asymptotics}
For balanced quasi-arcs of weight $x$,
\[
\lim_{x\to\infty}\E[\tau_I(\cG_x)]
=
\begin{cases}
\frac{17}{6}, & |I|=1,\\[1mm]
3, & |I|=2,\\[1mm]
\frac{19}{6}, & |I|=3.
\end{cases}
\]
\end{corollary}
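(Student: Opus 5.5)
The plan is to pass to the limit $x\to\infty$ directly in the closed formulae \eqref{eq:quasi-m1}, \eqref{eq:quasi-m2} and \eqref{eq:quasi-m3} of Propositions~\ref{prop:quasi-alpha-single}, \ref{prop:quasi-m2} and~\ref{prop:quasi-m3}, treating the rational terms and the finite sum separately. All three formulae are built from the same pieces: the constant $3$; the vanishing terms $\frac{2}{3x+1}$ and $\frac{1}{3x+2}$; a rational correction; and a multiple of
\[
S(x):=\sum_{s=3}^{x+2}\prod_{j=0}^{s-1}\frac{x+2-j}{3x+2-j}.
\]

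For the rational parts the limits are immediate, since numerator and denominator are quadratics in $x$:
\[
\frac{2(x+2)(x+1)+2x}{(3x+2)(3x+1)}\to\frac{2}{9},
\qquad
\frac{(x+2)(x+1)}{(3x+2)(3x+1)}\to\frac{1}{9}.
\]

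The main step, and the only genuine obstacle, is to show that
\[
S(x)\to\sum_{s\geq 3}3^{-s}=\frac{1}{18}.
\]
I would extend the summand by zero for $s>x+2$. For fixed $s$, each factor $\frac{x+2-j}{3x+2-j}$ tends to $\frac13$, so each summand converges to $3^{-s}$. To justify exchanging limit and sum, I would use a uniform geometric dominating bound and then dominated convergence (Tannery's theorem). The function $j\mapsto \frac{x+2-j}{3x+2-j}$ is decreasing in $j$, because its derivative has the sign of $-(3x+2)+(x+2)=-2x<0$. Hence every factor is at most $\frac{x+2}{3x+2}$, and this is at most $\frac12$ as soon as $x\geq 2$. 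Therefore the $s$-th summand is bounded by $2^{-s}$ uniformly in $x\geq 2$, and $\sum_s 2^{-s}<\infty$.

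Assembling the pieces gives the three limits:
\begin{align*}
|I|=1:&\quad 3-\tfrac{2}{9}+\tfrac{1}{18}=\tfrac{17}{6},\\
|I|=2:&\quad 3-\tfrac{1}{9}+2\cdot\tfrac{1}{18}=3,\\
|I|=3:&\quad 3+3\cdot\tfrac{1}{18}=\tfrac{19}{6}.
\end{align*}
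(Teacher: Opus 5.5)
Your proposal is correct and follows the same route as the paper: you pass to the limit term by term in \eqref{eq:quasi-m1}--\eqref{eq:quasi-m3}, using that each product tends to $3^{-s}$ and is dominated by a geometric sequence, so the common sum tends to $\sum_{s\ge 3}3^{-s}=\frac{1}{18}$. Your explicit bound by $2^{-s}$ for $x\ge 2$, obtained from the monotonicity of the factors in $j$, just spells out the domination that the paper asserts without details.
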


\begin{proof}
For fixed $s$,
\[
\prod_{j=0}^{s-1}\frac{x+2-j}{3x+2-j}\longrightarrow \left(\frac13\right)^s.
\]
The products are dominated by a geometric sequence, so the sums converge to
\[
\sum_{s=3}^{\infty}\left(\frac13\right)^s=\frac{1}{18}.
\]
Substituting this limit in \eqref{eq:quasi-m1}, \eqref{eq:quasi-m2}, and \eqref{eq:quasi-m3} gives the three stated values.
\end{proof}

\section{Discussion, comparisons, and open directions}\label{sec:discussion}

{We developed a cardinality-based extremal and finite-geometric study of generalized random access, interpolating between the recovery of one information symbol and the recovery of the full information vector.  Starting from the known subset-counting formula \eqref{eq:general-expectation}, we studied the expectation uniformly over all requested sets of a fixed cardinality and made its dependence on the geometry of the projective system explicit.}

The formulae above show why the generalized model is useful for understanding the extremal cases.  By varying $|I|$, one sees how the same projective system behaves as the recovery task moves from local to global.  In particular, there is no single universally best geometry for all random access tasks; the best behavior depends on the size of the requested set $I$.

In dimension three, the three relevant regimes are particularly transparent.  For $|I|=1$, systematic MDS encoders and simplex encoders have expectation $3$, while balanced quasi-arcs have limiting expectation $17/6<3$.  This recovers and explains the advantage of balanced quasi-arcs for single-symbol random access.  For $|I|=2$, the balanced quasi-arc expectation tends to $3$, just as the MDS value does when the length tends to infinity.  For finite, length-matched examples, the quasi-arc values can be slightly below the MDS values.  For $|I|=3$, the MDS value is the smallest one, in agreement with the optimality of systematic MDS encoders for full recovery and with the coverage-depth point of view of \cite{bar2024cover,bertuzzo2025coverage}.

\Cref{tab:numerics} reports representative length-matched comparisons in dimension three.  In the first two columns, $q$ denotes the field size of the simplex encoder and $n=q^2+q+1$ is its length.  For the quasi-arc column, we choose the weight $x$ so that $3x+3=n$, whenever such a balanced quasi-arc exists over a suitable field.  The MDS column gives the value of a systematic MDS encoder of the same length, when such an encoder exists over the relevant field; otherwise, it should be interpreted as the corresponding MDS benchmark value.  The smallest value in each row is highlighted.

\begin{table}[ht]
\centering
\small
\begin{tabular}{cccccc}
\toprule
$q$ & $n$ & $|I|$ & Simplex & Balanced quasi-arc & MDS \\
\midrule
4 & 21  & 1 & 3.0000 & \textbf{2.8470} & 3.0000 \\
4 & 21  & 2 & 3.3000 & \textbf{3.1439} & 3.1500 \\
4 & 21  & 3 & 3.3625 & 3.3593 & \textbf{3.1553} \\
\midrule
7 & 57  & 1 & 3.0000 & \textbf{2.8379} & 3.0000 \\
7 & 57  & 2 & 3.1607 & \textbf{3.0509} & 3.0536 \\
7 & 57  & 3 & 3.1811 & 3.2343 & \textbf{3.0542} \\
\midrule
13 & 183 & 1 & 3.0000 & \textbf{2.8347} & 3.0000 \\
13 & 183 & 2 & 3.0824 & \textbf{3.0156} & 3.0165 \\
13 & 183 & 3 & 3.0883 & 3.1873 & \textbf{3.0165} \\
\bottomrule
\end{tabular}
\caption{Expected generalized random access variables in dimension $3$.  Here $q$ is the simplex field size and $n=q^2+q+1$; the quasi-arc weight is chosen by $3x+3=n$.}
\label{tab:numerics}
\end{table}

These comparisons suggest the following interpretation.  Balanced quasi-arcs deliberately concentrate extra collinear structure on the three fundamental lines.  This makes the fundamental information points easier to recover individually and in pairs, because many small subsets already generate the requested point or the requested fundamental line.  The same concentration becomes a disadvantage for full recovery: collinear subsets on a fundamental line delay the generation of the whole plane.  By contrast, systematic MDS encoders do not privilege any small subset of information coordinates, but they guarantee that every set of $k$ columns generates the whole ambient space, which is exactly the desired property for recovering all information symbols.

The same phenomenon appears from the dual-code viewpoint.  Low-weight dual codewords through prescribed systematic positions create alternative recovering sets and can therefore improve partial recovery.  However, a large supply of such dependencies may be incompatible with the strongest full-recovery behavior.  Thus the generalized random access problem provides a way to measure the trade-off between local recovery advantages and global recovery efficiency.

Several problems remain open.  First, it would be useful to determine whether the quasi-arc advantage for $|I|=2$ can be proved uniformly for natural length-matched families, rather than only observed through formulae and numerical comparisons.  Second, one could extend the construction of geometrically biased projective systems beyond dimension three, where the role of fundamental lines should be replaced by higher-dimensional flats.  Third, it would be interesting to characterize the dual codewords that are most useful for generalized random access: the dual viewpoint suggests that families with many low-weight dual codewords through prescribed systematic positions can improve partial recovery, while this may conflict with good performance for full recovery.  Finally, the generalized random access model could be studied together with other storage constraints, such as error correction, locality, or non-uniform sampling distributions.

\section*{Acknowledgments}
{While completing this manuscript, the authors became aware of the recent preprint by Bar-Lev \cite{BarLev2026CodedRetrieval}, which studies a related two-file retrieval problem and contains results overlapping with parts of the present work, including the systematic-MDS calculation.  The overlapping systematic-MDS results in the present paper were obtained independently and were included in the Master’s thesis of Antonio Petrillo, defended in 2025, before the authors became aware of Bar-Lev’s work.}

This work was supported by a research grant (VIL52303) from Villum Fonden.  The third author is grateful for the hospitality of the Algebra group at DTU during the development of this research in August 2025.  The research of the third author was partially supported by the Italian National Group for Algebraic and Geometric Structures and their Applications (GNSAGA - INdAM).

\end{document}